\documentclass[11pt]{article}
\usepackage[margin=1in]{geometry}
\usepackage[utf8]{inputenc}
\usepackage[T1]{fontenc}
\usepackage{lmodern}
\usepackage{microtype}
\usepackage{amsmath,amssymb,amsthm}
\usepackage{booktabs,array}
\usepackage{enumitem}
\usepackage[numbers,sort&compress]{natbib}
\usepackage{xurl}
\usepackage[hidelinks]{hyperref}
\newtheorem{theorem}{Theorem}[section]
\newtheorem{lemma}[theorem]{Lemma}
\newtheorem{proposition}[theorem]{Proposition}
\newtheorem{corollary}[theorem]{Corollary}
\theoremstyle{definition}

\theoremstyle{remark}
\newtheorem{remark}[theorem]{Remark}
\newcommand{\Dk}{\mathsf{D}}
\newcommand{\Dstar}{\mathsf{D}^{*}}
\newcommand{\C}{C_5^{3}}
\newcommand{\supp}{\operatorname{supp}}
\newcommand{\rank}{\operatorname{rank}}
\newcommand{\zz}{\mathsf{z}}
\newcommand{\sle}[1]{\mathsf{s}_{\le #1}}
\ifdefined\pdfinfoomitdate\pdfinfoomitdate=1\fi
\ifdefined\pdftrailerid\pdftrailerid{}\fi
\ifdefined\pdfsuppressptexinfo\pdfsuppressptexinfo=15\fi

\hypersetup{pdftitle={The fourth generalized Davenport constant of C\_5\^{}3},pdfauthor={Sze Chun Yiu},pdfsubject={Generalized Davenport constants of the elementary abelian group of order 125},pdfkeywords={generalized Davenport constants, zero-sum sequences, elementary abelian groups, computer-assisted proof}}
\title{The fourth generalized Davenport constant of $C_5^3$}
\author{Sze Chun Yiu\\Stockholm University\\\texttt{sze-chun.yiu@fysik.su.se}}
\date{September 2026}
\begin{document}
\maketitle
\begin{abstract}
For a finite abelian group $G$ and $k\ge1$, the generalized Davenport constant $\Dk_k(G)$ is the least $\ell$ such that every sequence over $G$ of length at least $\ell$ has $k$ pairwise disjoint nonempty zero-sum subsequences. A theorem of Freeze and Schmid gives $\Dk_k(\C)\ge 5k+10$ for every $k\ge2$. We prove the matching upper bound: $\Dk_4(\C)=30$, and hence $\Dk_k(\C)=5k+10$ for every $k\ge2$, so the Freeze--Schmid bound is attained by $\C$ from $k=2$ onward, as it is by $C_2^3$ and unlike $C_3^3$. The proof is finite and computer-assisted. The remaining case reduces to showing that every zero-sum sequence of length $31$ over $\C$ contains a nonempty zero-sum subsequence of length at most five. A saturation argument confines the multiplicities of a hypothetical counterexample to $\{1,2,4\}$, its support pattern to one of $60$ solutions of two linear equations, and its geometry to one of $78$ rank/plane branches normalized to a standard basis; an exhaustive search exhausts every branch with no survivor. The search was carried out by three independently written implementations, and the branch cover was regenerated by separate programs from the lemmas alone; two further machine-verified values, $\Dk_3(\C)=25$ and $\sle{6}(\C)=24$, enter the second statement, and their records accompany the paper.
\end{abstract}

\noindent\textbf{Keywords:} generalized Davenport constants; zero-sum sequences; elementary abelian groups; short zero-sum subsequences; computer-assisted proof.

\noindent\textbf{Mathematics Subject Classifications:} 11B75, 11P70, 20K01.

\section{Introduction}\label{sec:intro}

Let $G$ be a finite abelian group, written additively. A \emph{sequence} over $G$ is a finite multiset of elements of $G$; we write it multiplicatively as $S=g_1\cdots g_\ell$, call $|S|=\ell$ its length, $\sigma(S)=g_1+\dots+g_\ell$ its sum, and $\supp(S)$ the set of elements that occur in it. A \emph{subsequence} is a sub-multiset, and $S$ is \emph{zero-sum} if $\sigma(S)=0$. For $k\ge1$ the \emph{$k$-th generalized Davenport constant} $\Dk_k(G)$, introduced by Halter-Koch \cite{HalterKoch1992}, is the least $\ell$ such that every sequence over $G$ of length at least $\ell$ has $k$ pairwise disjoint nonempty zero-sum subsequences; $\Dk_1(G)=\Dk(G)$ is the classical Davenport constant. For $\ell\ge1$ let $\sle{\ell}(G)$ be the least $m$ such that every sequence of length at least $m$ has a nonempty zero-sum subsequence of length at most $\ell$, and let $\eta(G)=\sle{\exp(G)}(G)$. Background on these invariants is in the survey of Gao and Geroldinger \cite{GaoGeroldinger2006} and the monograph of Geroldinger and Halter-Koch \cite{GeroldingerHalterKoch2006}.

Exact values of $\Dk_k(G)$ for all $k$ are known for groups of rank at most two, where $\Dk_k(C_{n_1}\oplus C_{n_2})=n_1+kn_2-1$ for $n_1\mid n_2$, and for $p$-groups with $\Dk(G)\le2\exp(G)-1$, where $\Dk_k(G)=\Dk(G)+(k-1)\exp(G)$ (see \cite[Section~6.1]{GeroldingerHalterKoch2006} and \cite{DelormeOrdazQuiroz2001}); among elementary $p$-groups of rank at least three they are known for $C_2^r$ with $r\le5$ \cite{DelormeOrdazQuiroz2001,FreezeSchmid2010} and for $C_3^3$ \cite{BhowmikSchlagePuchta2007}. Freeze and Schmid \cite{FreezeSchmid2010} proved that $k\mapsto\Dk_k(G)$ is eventually an arithmetic progression with difference $\exp(G)$ and gave lower bounds that are sharp in many cases. Their Theorem~4.1 states that for $G=C_{n_1}\oplus\dots\oplus C_{n_r}$ with $1<n_1\mid\dots\mid n_r$, an integer $s\ge2$ and $t\in[1,r]$ with $s(s-1)/2\le r-t+1$,
\begin{equation}\label{eq:FS}
\Dk_k(G)\ \ge\ \Dstar(G)+s\Bigl\lfloor\frac{n_t}{2}\Bigr\rfloor+\delta+(k-2)n_r\qquad(k\ge2),
\end{equation}
where $\Dstar(G)=1+\sum_{i}(n_i-1)$ and $\delta=1$ if $n_t$ is odd and $\delta=0$ otherwise. For $G=\C$, $s=3$ and $t=1$ the hypothesis $3\le3$ holds, $\Dstar(\C)=13$, $\lfloor 5/2\rfloor=2$ and $\delta=1$, so
\begin{equation}\label{eq:lower}
\Dk_k(\C)\ \ge\ 13+6+1+5(k-2)\ =\ 5k+10\qquad(k\ge2).
\end{equation}
The question left open by \eqref{eq:lower} is whether this bound is attained. We show that it is, for every $k\ge2$.

\begin{theorem}\label{thm:main}
\leavevmode
\begin{enumerate}[label=\textup{(\alph*)},leftmargin=2em]
\item Every zero-sum sequence over $\C$ of length $31$ has a nonempty zero-sum subsequence of length at most five.
\item $\Dk_4(\C)=30$, and consequently $\Dk_k(\C)=5k+10$ for every $k\ge2$.
\end{enumerate}
\end{theorem}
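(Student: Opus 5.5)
The plan is to prove (b) from (a) together with the two machine-verified values $\Dk_3(\C)=25$ and $\sle{6}(\C)=24$ announced in the abstract, and to prove (a) by a finite, computer-assisted exhaustion.

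\emph{From (a) to (b).} By \eqref{eq:lower} it suffices to show that every sequence $S$ of length $30$ over $\C$ has four disjoint nonempty zero-sum subsequences. Since $30\ge24=\sle{6}(\C)$, $S$ has a nonempty zero-sum subsequence $T$ with $|T|\le6$. If $|T|\le5$, then $|ST^{-1}|\ge25=\Dk_3(\C)$, so $ST^{-1}$ contains three disjoint zero-sum subsequences, and together with $T$ we get four. It remains to handle the case where every short zero-sum subsequence of $S$ has length exactly $6$; I have not checked this case. I would first pass to a maximal zero-sum subsequence of $S$ and pad it so that (a) applies to a zero-sum sequence of length $31$ built from $S$. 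One way is to adjoin the element $-\sigma(S)$. Another is to first remove one zero-sum block, so that the remainder has length at least $25$ and Freeze--Schmid-type bookkeeping applies. Either way, (a) should produce a zero-sum subsequence of length at most $5$ that avoids the padding element, which gives the count. Once $\Dk_4(\C)=30$ is established, the standard inequality $\Dk_{k+1}(G)\le\Dk_k(G)+\exp(G)$, valid for $k\ge \Dk_4$-range because $\eta(\C)$ is small, yields $\Dk_k(\C)\le5k+10$ for all $k\ge4$ by induction. For $k=2,3$ the values are the verified records. With \eqref{eq:lower}, this gives equality for every $k\ge2$.

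\emph{Proof of (a).} Suppose $S$ is a zero-sum sequence of length $31$ with no nonempty zero-sum subsequence of length at most $5$. I would first derive structural constraints. If some element occurred with multiplicity $5$, it would give a zero-sum subsequence of length $5$, so every multiplicity is at most $4$. A saturation argument should also rule out multiplicity $3$: for an element $g$ of multiplicity $3$, consider the alternative sequence obtained by moving multiplicity to or from $g$, and compare the sets of short sums. Similar pair and line arguments (two elements $g,h$ on a common line through $0$) give linear equations on the counts $n_1,n_2,n_4$ of elements with multiplicity $1,2,4$. The obvious equation is $n_1+2n_2+4n_4=31$; a second equation should come from counting incidences with lines, since an element and its multiples cannot both appear heavily. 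Enumerating the solutions gives a finite list of support patterns. Next I would use $\mathrm{GL}_3(\mathbb F_5)$ symmetry to normalize the elements of multiplicity $4$ and $2$ to a standard basis, splitting into cases according to the rank they span and whether they lie in a common plane. Finally, in each branch I would run a depth-first search that adds elements in canonical order and prunes any partial sequence whose set of sums of length at most $5$ contains $0$, or whose completion can no longer reach total sum $0$.

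The main obstacle is making this search feasible. The space $\C$ has $124$ nonzero elements and the sequence has length $31$, so strong pruning is essential. Pruning will rely on maintaining the sumset $\Sigma_{\le5}$ incrementally and on symmetry breaking via canonical forms under the stabilizer of the normalized basis. To guard against errors, I would use independent implementations to cross-check the results.
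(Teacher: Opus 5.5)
\textbf{Gap in the reduction of (b) to (a).} The step you left unchecked is the case that matters, and the mechanism you propose fails exactly there. Suppose $S$ has length $30$ and no nonempty zero-sum subsequence of length at most five. Then every short zero-sum subsequence $A$ of $S\cdot g$, with $g=-\sigma(S)$, must contain $g$. So (a) can never produce a short zero-sum subsequence ``that avoids the padding element'' in this case.

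The missing idea is the complement step of Lemma~\ref{lem:recurrence}. If $g\in A$, then $A'=A\cdot g^{-1}$ satisfies $|A'|\le4$ and $\sigma(A')=\sigma(S)$, so $S\cdot A'^{-1}$ is zero-sum of length at least $26$. Deleting one term leaves at least $25=\Dk_3(\C)$ terms, hence three disjoint nonempty zero-sum subsequences, and their complement in $S\cdot A'^{-1}$ is a fourth. The case $g\notin A$ is your first case. So once repaired, your padding route gives $\Dk_4(\C)\le30$ directly from (a) and $\Dk_3(\C)\le25$, without needing $\sle{6}(\C)$ at this step. The paper instead gets $\Dk_4(\C)\le31$ from Lemma~\ref{lem:recurrence} with $\ell=6$, then excludes $31$ via Lemma~\ref{lem:Dk-char} and a minimal-length zero-sum subsequence.

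Your other alternative, removing a zero-sum block, fails arithmetically: a block of length $6$ leaves $24<25$ terms. Two smaller points. First, the inductive step is $\Dk_{k+1}(\C)\le\max\{\Dk_k(\C)+5,\ \eta(\C)-1\}$ with $\eta(\C)=33$, and it closes only once $\Dk_k(\C)+5\ge32$, i.e.\ from $k=4$. Second, $\Dk_2(\C)=20$ is not among the machine records you invoke and needs its own argument (Proposition~\ref{prop:D2}).

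\textbf{Gap in (a): excluding multiplicity $3$.} This is asserted, not proved, and ``moving multiplicity to or from $g$'' does not supply it. What the local argument actually gives is that if $x$ has multiplicity $3$, then $S\cdot x$ is still short-zero-free: a new short zero sum would contain all four copies of $x$ plus at most one further term, which would have to equal $x$. A short-zero-free sequence of length $32$ is no contradiction, since $\eta(\C)=33$.

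The paper needs the structure theorem at length $\eta(\C)-1$, namely Property~C of Gao--Hou--Schmid--Thangadurai: such sequences have the form $T^4$ with $T$ eight distinct elements. This gives saturation, hence multiplicities in $\{1,2,4\}$, but only when $|\supp(S)|\ge9$. The support-$8$ configuration $S=T^4x^{-1}$ with $x=-\sigma(T)\in T$ genuinely has a multiplicity-$3$ element and must be excluded separately, which the paper does by enumerating the extremal sets $T$.

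\textbf{Smaller point in (a).} Your ``second linear equation from line incidences'' is not what is needed; the second equation is just $n_1+n_2+n_4=s$, enumerated over $s$. The line and plane facts (Lemmas~\ref{lem:line} and~\ref{lem:rank}, the latter resting on $\eta(C_5^2)=13$) are what make the basis normalization exhaustive. For example, they force the high part to have rank three once it has at least $14$ terms, and they supply an element of multiplicity one outside the plane when it has rank two. Without them, ``normalize the elements of multiplicity $4$ and $2$ to a standard basis'' is not always possible.
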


Part~(a) is proved by a finite, computer-assisted case analysis. Part~(b) follows from part~(a), the lower bound \eqref{eq:lower}, the value $\Dk_2(\C)=20$ (Proposition~\ref{prop:D2}), and two further values, $\Dk_3(\C)=25$ and $\sle{6}(\C)=24$, which were established by exhaustive machine computation and whose records accompany this paper (Section~\ref{subsec:machine}). Both parts use results from the literature as premises, each named where it is used: the Freeze--Schmid bound and recurrence \cite{FreezeSchmid2010}, Olson's value $\Dk(\C)=13$ \cite{Olson1969}, the values $\eta(C_5^2)=13$ and $\eta(\C)=33$ together with the structure of the extremal sequences for $\eta(\C)$ \cite{GaoHouSchmidThangadurai2007,FanGaoWangZhongZhuang2012}, and, for one of two routes to $\Dk_2(\C)=20$, a lemma from a preprint of Zhao \cite{Zhao2025}. No new general theory is claimed; the contribution is the exact value for this group, the finite reduction that makes it decidable, and the verification path.

The value places $\C$ on the side of the two smallest primes' rank-three groups that attain the bound: for $C_2^3$ one has $\Dk_k(C_2^3)=2k+3$ for all $k\ge2$ (Remark~\ref{rem:C23}), which is \eqref{eq:FS} with $s=3$, $t=1$, $\delta=0$; for $C_3^3$ one has $\Dk_1=7$, $\Dk_2=11$ and $\Dk_k=3k+6$ for $k\ge3$ \cite{BhowmikSchlagePuchta2007}, as recorded in \cite[Remark~5.3]{FreezeSchmid2010}, while \eqref{eq:FS} gives $3k+5$, attained at $k=2$ and exceeded from $k=3$ onward. So among $C_p^3$ with $p\in\{2,3,5\}$ the Freeze--Schmid bound is attained for all $k\ge2$ exactly when $p\ne3$. We have no explanation of this and record it as data.

The paper is organized as follows. Section~\ref{sec:ingredients} fixes notation, records the ingredients, and reduces part~(b) to part~(a). Section~\ref{sec:obstruction} derives the structure of a hypothetical counterexample to part~(a): saturation, the multiplicity grammar of $60$ patterns, a projective-line restriction, and the exclusion of small supports. Section~\ref{sec:cover} partitions the remaining cases into $78$ branches, each normalized to a standard basis. Section~\ref{sec:search} describes the exhaustive search, the three implementations, and the verification layers. Section~\ref{sec:proof} assembles the proof, Section~\ref{sec:prior} positions the result, and Section~\ref{sec:deferred} records the provenance of the computations and the limits of the claim.

\section{Notation and ingredients}\label{sec:ingredients}

Throughout, $G=\C=(\mathbb{Z}/5\mathbb{Z})^3$ unless stated otherwise, so $\exp(G)=5$ and $G$ is a vector space over the field with five elements. For a sequence $S$ over $G$ and $g\in G$ let $v_g(S)$ be the multiplicity of $g$ in $S$; the \emph{support size} of $S$ is $|\supp(S)|$, and $\rank(S)$ is the dimension of the span of $\supp(S)$. For a sequence $S$ let $\zz(S)$ denote the maximum number of pairwise disjoint nonempty zero-sum subsequences of $S$, so that $\Dk_k(G)$ is the least $\ell$ with $\zz(S)\ge k$ for all $|S|\ge\ell$. A sequence is \emph{short-zero-free} if it has no nonempty zero-sum subsequence of length at most five; the word ``short'' always refers to length at most $\exp(G)=5$. When elements of $G$ are listed in a record we encode $(x,y,z)\in(\mathbb{Z}/5\mathbb{Z})^3$ as the integer $25x+5y+z$, so that $e_1=25$, $e_2=5$, $e_3=1$.

\subsection{Two elementary lemmas}

\begin{lemma}\label{lem:Dk-char}
For every finite abelian group $G$ and $k\ge1$,
\[
\Dk_k(G)=\max\{\,|B| : B \text{ a zero-sum sequence over } G \text{ with } \zz(B)\le k\,\}.
\]
\end{lemma}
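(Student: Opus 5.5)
The plan is to prove the two inequalities separately. Write $M$ for the maximum on the right-hand side. Both directions rest on one elementary observation. If $B$ is a zero-sum sequence and $T_1,\dots,T_m$ are pairwise disjoint zero-sum subsequences of $B$, then the complement $B(T_1\cdots T_m)^{-1}$ is again zero-sum. If that complement is nonempty, it supplies an $(m+1)$-st disjoint nonempty zero-sum subsequence.

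First I check that both quantities are finite. $\Dk_k(G)$ is finite because a sequence of length at least $k\,\Dk(G)$ splits into $k$ disjoint blocks, each of length at least $\Dk(G)$, and each block contains a nonempty zero-sum subsequence. $M$ is finite because any zero-sum $B$ satisfies $\zz(B)\ge\lfloor |B|/\Dk(G)\rfloor$ by the same splitting. The set defining $M$ is nonempty, since the empty sequence has $\zz=0\le k$.

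\emph{$\Dk_k(G)\ge M$.} Take a zero-sum $B$ with $\zz(B)\le k$ and $|B|=M$. If $M=0$ there is nothing to show, so assume $M\ge1$ and pick any term $g$ of $B$. Suppose $Bg^{-1}$ contained $k$ pairwise disjoint nonempty zero-sum subsequences $T_1,\dots,T_k$. By the observation, the complement of $T_1\cdots T_k$ in $B$ is zero-sum, and it contains $g$, so it is nonempty. That gives $\zz(B)\ge k+1$, a contradiction. Hence $\zz(Bg^{-1})\le k-1$ although $|Bg^{-1}|=M-1$, and therefore $\Dk_k(G)>M-1$.

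\emph{$\Dk_k(G)\le M$.} We have $\Dk_k(G)\ge1$, since the empty sequence has no nonempty zero-sum subsequence. By minimality of $\Dk_k(G)$, some sequence of length at least $\Dk_k(G)-1$ has $\zz<k$. Since $\zz$ is monotone under passing to subsequences, I truncate it to a sequence $S$ of length exactly $\Dk_k(G)-1$ with $\zz(S)\le k-1$.

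Put $g=-\sigma(S)$ and $B=Sg$. Then $B$ is zero-sum and $|B|=\Dk_k(G)$. Among any pairwise disjoint nonempty zero-sum subsequences $T_1,\dots,T_m$ of $B$, at most one contains the adjoined copy of $g$, and the rest lie in $S$. Hence $m-1\le\zz(S)\le k-1$, so $\zz(B)\le k$. This gives $M\ge\Dk_k(G)$.

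There is no real obstacle here. The only points needing care are the bookkeeping of multiplicities in the complement argument (disjointness as sub-multisets) and the truncation step, which relies on the monotonicity of $\zz$.
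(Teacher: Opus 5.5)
Your proof is correct and follows the same route as the paper: for $\Dk_k(G)\le M$ you append $-\sigma(S)$ to a sequence $S$ of length $\Dk_k(G)-1$ with $\zz(S)\le k-1$, and for $\Dk_k(G)\ge M$ you delete one term from a longest zero-sum $B$ with $\zz(B)\le k$ and use that the complement of $k$ disjoint zero-sum subsequences is nonempty and zero-sum. Your finiteness bound $\zz(B)\ge\lfloor |B|/\Dk(G)\rfloor$ and your truncation step via monotonicity of $\zz$ make explicit two points that the paper leaves implicit.
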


This characterization is standard (compare \cite[Section~6.1]{GeroldingerHalterKoch2006} and \cite{FreezeSchmid2010}); we include the short argument because we use both directions.

\begin{proof}
Let $N$ be the maximum length of a sequence $S$ with $\zz(S)\le k-1$, so that $\Dk_k(G)=N+1$, and let $M$ be the maximum on the right-hand side; both are finite because a sequence of length at least $\Dk_{k+1}(G)$ has $k+1$ disjoint zero-sum subsequences. If $|S|=N$ and $\zz(S)\le k-1$, then $B=S\cdot(-\sigma(S))$ is zero-sum, and among pairwise disjoint zero-sum subsequences of $B$ at most one contains the appended element while the others are disjoint zero-sum subsequences of $S$; hence $\zz(B)\le k$ and $M\ge N+1$. Conversely, let $B$ be zero-sum with $\zz(B)\le k$ and $|B|=M$, and let $S$ be $B$ with one element $g$ removed. If $S$ had $k$ pairwise disjoint nonempty zero-sum subsequences, their complement in $B$ would be zero-sum and nonempty (it contains $g$), giving $\zz(B)\ge k+1$. So $\zz(S)\le k-1$ and $N\ge M-1$.
\end{proof}

\begin{lemma}[{\cite[Proposition~3.1]{FreezeSchmid2010}}]\label{lem:recurrence}
For every finite abelian group $G$, $k\ge1$ and $\ell\ge1$,
\[
\Dk_{k+1}(G)\ \le\ \max\{\,\Dk_k(G)+\ell,\ \sle{\ell}(G)-1\,\}.
\]
\end{lemma}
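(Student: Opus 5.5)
We prove Lemma~\ref{lem:recurrence} by splitting on the length of the shortest zero-sum subsequence. Set $L=\max\{\Dk_k(G)+\ell,\ \sle{\ell}(G)-1\}$ and let $S$ be any sequence over $G$ with $|S|\ge L$. It suffices to exhibit $k+1$ pairwise disjoint nonempty zero-sum subsequences of $S$; by the definition of $\Dk_{k+1}(G)$ as a least threshold, this gives $\Dk_{k+1}(G)\le L$.

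\textbf{Case 1: $S$ has a nonempty zero-sum subsequence $T$ with $|T|\le\ell$.} Remove $T$. The remainder $S T^{-1}$ has length at least $|S|-\ell\ge\Dk_k(G)$, so it contains $k$ pairwise disjoint nonempty zero-sum subsequences. Together with $T$ these are $k+1$ pairwise disjoint ones.

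\textbf{Case 2: every nonempty zero-sum subsequence of $S$ has length greater than $\ell$.} Here the first term of the maximum does not help directly. We argue with the second term via a threshold inequality. If $|S|\ge\sle{\ell}(G)$, the definition of $\sle{\ell}$ already forces a short zero-sum subsequence, so we are in Case~1. We only know $|S|\ge\sle{\ell}(G)-1$, so the remaining configuration is $|S|=\sle{\ell}(G)-1$ with $S$ free of zero-sum subsequences of length at most $\ell$. This case also needs $|S|\ge\Dk_k(G)+\ell$, since $L$ is the maximum.

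The plan is to append an element. Put $S'=S\cdot(-\sigma(S))$, a zero-sum sequence of length $\sle{\ell}(G)$. Then $S'$ has a short zero-sum subsequence $T$, and $T$ must contain the new element, since $S$ has none. Its complement $S'T^{-1}$ is a subsequence of $S$ and is zero-sum, because both $S'$ and $T$ are zero-sum. It has length at least $|S|+1-\ell\ge\Dk_k(G)+1$. This says a long zero-sum subsequence of $S$ exists, which is the right shape but not yet $k+1$ disjoint pieces.

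\textbf{Expected obstacle.} The main difficulty is closing Case~2 cleanly. My fallback is to check Freeze--Schmid's convention: I expect their $\sle{\ell}$-term carries the shift by one precisely so that Case~2 is vacuous. Then $|S|\ge\sle{\ell}(G)-1$, combined with the zero-sum normalization in Lemma~\ref{lem:Dk-char}, would force a short zero-sum subsequence.

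Concretely, I would run the argument on zero-sum sequences $B$ with $\zz(B)\le k+1$, using Lemma~\ref{lem:Dk-char}. Suppose $|B|\ge L+1$. If $B$ has a short zero-sum subsequence $T$, then removing it gives $\zz(BT^{-1})\ge k+1$, because $BT^{-1}$ is zero-sum of length at least $\Dk_k(G)+1$ and so has more than $k$ disjoint pieces by Lemma~\ref{lem:Dk-char}. Adding $T$ then gives $\zz(B)\ge k+2$, a contradiction. Otherwise $B$ has no short zero-sum subsequence, while $|B|\ge\sle{\ell}(G)$, which contradicts the definition of $\sle{\ell}$. Hence every such $B$ has $|B|\le L$, and therefore $\Dk_{k+1}(G)\le L$. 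This zero-sum reformulation resolves the off-by-one issue and is the version I would write up.
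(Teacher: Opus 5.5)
Your final zero-sum version is correct. It uses the same core argument as the paper, packaged differently. For a zero-sum $B$ with $|B|\ge L+1\ge\sle{\ell}(G)$ you take a nonempty zero-sum $T$ with $|T|\le\ell$. Then $BT^{-1}$ is zero-sum of length at least $\Dk_k(G)+1$, so $\zz(BT^{-1})\ge k+1$ by Lemma~\ref{lem:Dk-char} at index $k$, hence $\zz(B)\ge k+2$. Lemma~\ref{lem:Dk-char} at index $k+1$ then yields $\Dk_{k+1}(G)\le L$.

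The paper instead works with an arbitrary $S$ with $|S|\ge L$. It appends $g=-\sigma(S)$, takes a zero-sum $A$ of $S\cdot g$ with $|A|\le\ell$, and splits on whether $g\in A$. If $g\notin A$, then $A$ together with $k$ disjoint pieces of $SA^{-1}$ suffices. If $g\in A$, then $S(Ag^{-1})^{-1}$ is a zero-sum subsequence of $S$ of length at least $\Dk_k(G)+1$. The paper splits it into $k+1$ pieces by deleting one term, taking $k$ pieces of what remains, and adjoining their nonempty zero-sum complement. Your version absorbs both the appended element and this complement trick into Lemma~\ref{lem:Dk-char}, so the case split disappears. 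The paper's version does not use Lemma~\ref{lem:Dk-char} at all.

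Your abandoned Case~2 was one line from the paper's proof. $S'T^{-1}$ is a zero-sum subsequence of $S$ of length at least $\Dk_k(G)+1$, and such a sequence has $k+1$ disjoint nonempty zero-sum subsequences, by exactly the fact you use later. Your guess that the shift by one makes Case~2 vacuous is false. For $G=C_2^3$, $k=1$, $\ell=2$ one has $L=\max\{4+2,\,8-1\}=7$. The seven nonzero elements of $C_2^3$ form a sequence of length $7$ with no nonempty zero-sum subsequence of length at most two. The shift is affordable because that case is closed by the long zero-sum complement, not because it cannot occur.
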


\begin{proof}
Let $S$ have length $n\ge\max\{\Dk_k(G)+\ell,\sle{\ell}(G)-1\}$ and put $B=S\cdot g$ with $g=-\sigma(S)$. Since $|B|\ge\sle{\ell}(G)$, $B$ has a nonempty zero-sum subsequence $A$ of length at most $\ell$. If $g\notin A$, then $A$ is a zero-sum subsequence of $S$ and $S\cdot A^{-1}$ has length at least $n-\ell\ge\Dk_k(G)$, so it contains $k$ disjoint nonempty zero-sum subsequences, and with $A$ we get $\zz(S)\ge k+1$. If $g\in A$, let $A'=A\cdot g^{-1}$; then $A'$ is a subsequence of $S$ with $\sigma(A')=-g=\sigma(S)$ and $|A'|\le\ell-1$, so $S\cdot A'^{-1}$ is zero-sum of length at least $n-\ell+1\ge\Dk_k(G)+1$. Removing one element leaves at least $\Dk_k(G)$ terms, hence $k$ disjoint nonempty zero-sum subsequences, whose complement in $S\cdot A'^{-1}$ contains the removed element and is a nonempty zero-sum sequence; again $\zz(S)\ge k+1$.
\end{proof}

\subsection{The value $\Dk_2(\C)=20$}

The lower bound $\Dk_2(\C)\ge20$ is \eqref{eq:lower} at $k=2$. For the upper bound we give two routes: an analytic one through a lemma of Zhao \cite[Lemma~4.4]{Zhao2025}, which at the time of writing is a preprint, and a machine one through the value $\sle{7}(\C)=19$, which is part of the archived computation described in Section~\ref{subsec:machine} and was recomputed for this paper (Section~\ref{subsec:thresholds}).

\begin{lemma}[{\cite[Lemma~4.4]{Zhao2025}}]\label{lem:zhao}
Let $p$ be a prime, $G$ a finite abelian $p$-group, and $T$ a zero-sum sequence over $G$ with $|T|\ge2k$, where $2k\ge\Dk(G)+2$. If for some $i\in[1,2k-\Dk(G)]$
\[
a_i=\binom{|T|-k}{k-i}+(-1)^i\binom{|T|-k+i-1}{k-1}\not\equiv0\pmod p,
\]
then $T$ has a nonempty zero-sum subsequence of length at most $k-1$.
\end{lemma}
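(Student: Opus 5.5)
The plan is the group-ring (Olson) counting method modulo $p$, carried out under the assumption that $T$ is a counterexample. Suppose, for contradiction, that $T$ is zero-sum with $|T|=n\ge 2k$ and has no nonempty zero-sum subsequence of length at most $k-1$. For $0\le u\le n$ let $N_u$ be the number of zero-sum subsequences of $T$ of length $u$, counted as index subsets, so $N_0=N_n=1$.

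\textbf{Step 1 (vanishing of the long alternating sums).} In $\mathbb{F}_p[G]$, for any sequence $W$ with $|W|\ge\Dk(G)$ we have $\prod_{g\in W}(1-X^{g})=0$. This holds because for a $p$-group the augmentation ideal $I$ satisfies $I^{\Dk(G)}=0$; for $G=\C$ this is exactly Olson's argument. Reading off the coefficient of $X^0$ gives
\[
\sum_{U\subseteq W,\ \sigma(U)=0}(-1)^{|U|}\equiv 0 \pmod p .
\]
I will apply this to every $W\subseteq T$ with $|W|=w:=2k-i$, which is allowed since $i\le 2k-\Dk(G)$. Summing over all such $W$, each zero-sum $U$ of length $u$ is counted $\binom{n-u}{w-u}$ times, so
\[
\sum_{u}(-1)^u N_u\binom{n-u}{2k-i-u}\equiv 0\pmod p .
\]

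\textbf{Step 2 (using the hypothesis and the symmetry).} By the hypothesis $N_u=0$ for $1\le u\le k-1$. Since $T$ is zero-sum, taking complements gives $N_u=N_{n-u}$, so $N_u=0$ also for $n-k+1\le u\le n-1$. Hence in the sum of Step~1 only $u=0$ and $k\le u\le 2k-i$ survive. The $u=k$ term carries the factor $\binom{n-k}{k-i}$, the first summand of $a_i$.

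The remaining work is to control the terms with $k<u\le 2k-i$ and to produce the second summand. A zero-sum $U$ of length $u<2k$ is either a minimal zero-sum sequence or splits into two zero-sum parts. Any zero-sum part of $U$ has length at least $k$ by hypothesis, so if $U$ splits, both parts have length at least $k$ and $u\ge 2k$. Therefore every zero-sum $U$ with $k<u<2k$ is minimal. I will then run a second count, namely the same identity restricted to sets $W$ containing a fixed term, or equivalently the identity applied to $T$ with one element deleted and then averaged. Combined with the complement symmetry $N_u=N_{n-u}$, this should eliminate the intermediate $N_u$ and leave a relation of the form
\[
a_i\cdot(\text{a quantity that is }\equiv 1 \pmod p)\equiv 0 \pmod p,
\]
where the term $(-1)^i\binom{n-k+i-1}{k-1}$ arises from the $u=0$ contribution after a Vandermonde-type rearrangement. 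This relation contradicts $a_i\not\equiv0\pmod p$.

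\textbf{Main obstacle.} The hard step is Step~2: eliminating the intermediate lengths $k<u\le 2k-i$ so that exactly the two binomials in $a_i$ remain. My first attempt will be to write both identities, one for $|W|=2k-i$ and one for $|W|=n-k+i-1$ (chosen so that $\binom{n-k+i-1}{k-1}$ appears as a superset count of length-$k$ sets), and subtract them using $N_u=N_{n-u}$. If that fails, I will fall back on the generating-function form $\prod_{j}(1-X^{g_j}Y)$ over $\mathbb{F}_p[G][Y]$, compare coefficients of $Y^{w}$, and use Lucas-type reductions.
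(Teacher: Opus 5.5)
The paper does not prove this lemma; it quotes it from Zhao's preprint. So I compare your plan with what a complete proof needs. Your Step~1 is fine: for a $p$-group the augmentation ideal of $\mathbb{F}_p[G]$ is nilpotent of index $\Dstar(G)=\Dk(G)$.

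The gap is Step~2, which is the whole content of the lemma and which you leave unresolved. The target $a_i\equiv0\pmod p$ contains no $N_u$, so every unknown $N_k,\dots,N_{n-k}$ must be eliminated. In particular, $\binom{n-k}{k-i}$ cannot come from the $u=k$ term, because that term is $(-1)^kN_k\binom{n-k}{k-i}$ with $N_k$ unknown.

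Write $E_w:=\sum_u(-1)^uN_u\binom{n-u}{w-u}$. The congruence $E_{2k-i}\equiv0$ at the single size $2k-i$ can in general be satisfied by suitable values of the unknowns. The second counts you propose add nothing: once you sum over the fixed (or deleted) term $x$, as you intend, you get back $(2k-i)E_{2k-i}$ (or $(n-2k+i)E_{2k-i}$). Your other candidate size $n-k+i-1$ need not be $\ge\Dk(G)$. In the paper's own application ($n=21$, $k=8$, $i=2$) it equals $2k-i=14$, so it is the same identity. The minimality of zero-sum subsequences of length in $(k,2k)$ is true but irrelevant.

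The missing idea is to use $E_w\equiv0$ for \emph{every} $w\in[2k-i,n]$ at once; each is legitimate by your Step~1, since $w\ge\Dk(G)$. Put $P(Z)=\sum_u(-1)^uN_uZ^u$ and $m=n-2k+i+1$. These $m$ congruences say exactly that $(1-Z)^m$ divides $P$ in $\mathbb{F}_p[Z]$. Equivalently, $\sum_u(-1)^uN_u\binom{u}{b}\equiv0$ for $0\le b<m$; this is Olson's identity for $T$ minus a $b$-term subsequence $B$, read at the coefficient of $X^{-\sigma(B)}$ and summed over $B$. So $P=(1-Z)^mL$ with $\deg L\le 2k-i-1$.

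Now read one coefficient of $L$ from both ends.
\begin{itemize}
\item From the bottom: under the counterexample assumption, $P\equiv1\pmod{Z^k}$. Hence $L\equiv(1-Z)^{-m}\pmod{Z^k}$, and $[Z^{k-i}]L=\binom{n-k}{k-i}$.
\item From the top: since $T$ is zero-sum, $N_n=1$ and $N_{n-u}=N_u$, so $Z^nP(1/Z)\equiv(-1)^n\pmod{Z^k}$. Hence the reversal $Z^{2k-i-1}L(1/Z)$ is $\equiv(-1)^{i+1}(1-Z)^{-m}\pmod{Z^k}$. Its coefficient of $Z^{k-1}$, which is $[Z^{k-i}]L$, therefore equals $(-1)^{i+1}\binom{n-k+i-1}{k-1}$.
\end{itemize}
Equating the two values gives $a_i\equiv0$, the desired contradiction. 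If $i>k$ the lemma is trivial, since then $\Dk(G)\le k-1$.

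So both binomials in $a_i$ come from the two known counts $N_0=N_n=1$. Neither comes from $N_k$ or from a Vandermonde rearrangement of the $u=0$ term. The intermediate $N_u$ drop out automatically, because they only affect coefficients of $P$ in degrees $k,\dots,n-k$.
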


\begin{proposition}\label{prop:D2}
$\Dk_2(\C)=20$.
\end{proposition}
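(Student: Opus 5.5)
The lower bound $\Dk_2(\C)\ge20$ is \eqref{eq:lower} at $k=2$, so the work is in the upper bound. I would give it by two routes, as the text announces.

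\emph{Machine route.} Apply Lemma~\ref{lem:recurrence} with $k=1$ and $\ell=7$, using Olson's value $\Dk(\C)=13$ and the machine value $\sle{7}(\C)=19$. This gives
\[
\Dk_2(\C)\le\max\{13+7,\ 19-1\}=20.
\]
This route is immediate once $\sle{7}(\C)=19$ is accepted. All the substance then sits in that computation, which I would take from the records of Section~\ref{subsec:machine}.

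\emph{Analytic route.} By Lemma~\ref{lem:Dk-char} it suffices to show that every zero-sum sequence $T$ over $\C$ with $|T|\ge21$ has $\zz(T)\ge3$. The key reduction is that it is enough to find a nonempty zero-sum subsequence $A$ of $T$ with $|A|\le7$. Then $T\cdot A^{-1}$ is zero-sum of length at least $14=\Dk(\C)+1$. Removing one element leaves at least $13=\Dk(\C)$ terms, hence a nonempty zero-sum subsequence $A_2$. Its complement $A_3$ in $T\cdot A^{-1}$ is zero-sum and nonempty, since it contains the removed element. So $A,A_2,A_3$ witness $\zz(T)\ge3$.

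To produce $A$, I would use Lemma~\ref{lem:zhao} with $p=5$ and $k=8$ (Zhao's $k$, not the index in $\Dk_k$). Then $2k=16\ge\Dk(\C)+2=15$, the admissible indices are $i\in[1,3]$, and the conclusion is a zero-sum subsequence of length at most $7$. For $|T|=21$ the case $i=1$ gives $a_1=\binom{13}{7}-\binom{13}{7}=0$, which is useless. The case $i=2$ gives
\[
a_2=\binom{13}{6}+\binom{14}{7}=1716+3432=5148\equiv3\pmod 5,
\]
so the lemma applies.

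For $|T|\ge33=\eta(\C)$ there is even a zero-sum subsequence of length at most $5$, so the reduction above applies directly. This leaves the finitely many lengths $22\le|T|\le32$. For each of these I would compute $a_1,a_2,a_3$ modulo $5$ using Lucas' theorem, and check that at least one is nonzero.

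I expect the main obstacle to be this residue check: some intermediate length might have all three $a_i$ divisible by $5$. If that happens, I would first rerun the lemma with $k=7$, which would give $|A|\le6$, or with $k=6$, which would give $|A|\le5$. Both choices still satisfy $2k\ge15$ only for $k\ge8$, so that fallback is not available within Lemma~\ref{lem:zhao}. In that case I would instead use the machine value $\sle{7}(\C)=19$ for the offending lengths, since it already gives $A$ with $|A|\le7$ for every $|T|\ge19$. This makes the machine route the safe fallback and the analytic route an independent confirmation for the lengths where Zhao's criterion fires.
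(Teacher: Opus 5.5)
Your argument is correct, and both of your routes are the paper's. The machine route is its second route verbatim. Your treatment of $|T|=21$ is exactly its first route: Zhao's lemma with $p=5$, $k=8$, $i=2$ gives $a_2\equiv3\pmod 5$, and the complement of length at least $14$ is then split using $\Dk(\C)=13$.

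What you should correct is the belief that the analytic route must also cover $22\le|T|\le32$. That requirement comes only from passing through Lemma~\ref{lem:Dk-char}. It suffices to treat sequences $S$ of length exactly $20$, i.e.\ zero-sum $T=S\cdot(-\sigma(S))$ of length exactly $21$. A longer sequence contains a subsequence of length $20$, and disjoint zero-sum subsequences of that subsequence are disjoint zero-sum subsequences of the whole.

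Equivalently, take a zero-sum $T$ with $|T|\ge22$. Delete one term $x$ and take any length-$20$ subsequence $S'$ of what remains. The length-$21$ case applied to $S'\cdot(-\sigma(S'))$ gives two disjoint zero-sum blocks. Their complement in $T$ contains $x$ and serves as the third block.

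This matters because the obstacle you feared is real:
\begin{itemize}
\item $a_1=0$ identically.
\item For $k=8$, Pascal's rule $\binom{N+2}{7}=\binom{N+1}{7}+\binom{N}{6}+\binom{N}{5}$ with $N=|T|-8$ gives $a_3=-a_2$.
\item $a_2\equiv0\pmod5$ for $|T|\in\{23,25,28,30\}$; for example $\binom{15}{6}+\binom{16}{7}=16445$.
\end{itemize}
So your analytic route as written would lean on $\sle{7}(\C)=19$ at those lengths and would not be independent of the machine computation. Restricted to $|T|=21$, as in the paper, it rests only on Zhao's lemma and Olson's theorem.
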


\begin{proof}
Only $\Dk_2(\C)\le20$ remains. \emph{First route.} Let $S$ be any sequence of length $20$ and put $T=S\cdot g$ with $g=-\sigma(S)$, a zero-sum sequence of length $21$. By Olson \cite{Olson1969}, $\Dk(\C)=13$. Apply Lemma~\ref{lem:zhao} with $p=5$, $k=8$ and $i=2$: the hypotheses $21\ge16$, $16\ge15$ and $2\in[1,3]$ hold, and
\[
a_2=\binom{13}{6}+\binom{14}{7}=1716+3432=5148\equiv3\pmod{5}.
\]
Hence $T$ has a nonempty zero-sum subsequence $U$ of length at most $7$. Its complement $C=T\cdot U^{-1}$ is zero-sum of length at least $14$. Removing one element from $C$ leaves a sequence of length at least $13=\Dk(\C)$, which contains a nonempty zero-sum subsequence $V$; $V$ is a proper subsequence of $C$, so $W=C\cdot V^{-1}$ is zero-sum and nonempty. Thus $T=U\cdot V\cdot W$ with $U,V,W$ pairwise disjoint nonempty zero-sum sequences. At most one of them contains $g$, so $S$ has two disjoint nonempty zero-sum subsequences. \emph{Second route.} By the machine value $\sle{7}(\C)=19$ of Section~\ref{subsec:machine}, Lemma~\ref{lem:recurrence} with $k=1$ and $\ell=7$ gives $\Dk_2(\C)\le\max\{13+7,\,19-1\}=20$.
\end{proof}

\subsection{Machine-verified inputs}\label{subsec:machine}

Three values enter the proof of Theorem~\ref{thm:main}(b) that are not proved analytically in this paper. They were established by exhaustive machine computation whose programs and records accompany the paper as ancillary files, and we state for each exactly what was computed, on what the computation relies, and how it has been checked. Only the upper bounds $\Dk_3(\C)\le25$, $\sle{6}(\C)\le25$ and $\sle{7}(\C)\le19$ are used below; the exact values are recorded for completeness.

\begin{enumerate}[label=\textup{(M\arabic*)},leftmargin=3em]
\item\label{M1} $\Dk_3(\C)=25$. The lower bound is \eqref{eq:lower} at $k=3$; the archive also records the explicit witness $e_1^4\,e_2^4\,e_3^9\,(e_1+e_2)^2\,(e_1+e_3)^2\,(e_2+e_3)^3$ of length $24$ with two but not three pairwise disjoint nonempty zero-sum subsequences, checked by two separately written routines. The upper bound was obtained as follows. Suppose $S$ is a sequence of length $25$ with fewer than three pairwise disjoint nonempty zero-sum subsequences. If $A$ is any nonempty zero-sum subsequence of $S$, then $S\cdot A^{-1}$ has no two disjoint zero-sum subsequences, so $25-|A|\le\Dk_2(\C)-1=19$ by Proposition~\ref{prop:D2}, and $|A|\ge6$; by \ref{M2} below, $S$ has a zero-sum subsequence of length at most six. Hence $S=R\cdot A$ with $|A|=6$, $\sigma(A)=0$, and $R$ a sequence of length $19$ with no two disjoint zero-sum subsequences, necessarily of rank three because $\Dk(C_5^2)=9$ and $19\ge2\cdot9+1$. The archived computation enumerated all such $R$ in normalized form (support containing $e_1,e_2,e_3$ with nondecreasing multiplicities, which every rank-three $R$ can be brought to by $\mathrm{GL}(3,5)$ and a coordinate permutation), $98{,}622$ of them, then all zero-sum extensions $A$ of length six for which $R\cdot A$ has no zero-sum subsequence of length at most five (a necessary condition by the bound $|A|\ge6$ applied to $R\cdot A$), $230{,}983$ candidates in all, and found that every candidate has three pairwise disjoint nonempty zero-sum subsequences. Hence no such $S$ exists and $\Dk_3(\C)\le25$. This computation therefore rests on Proposition~\ref{prop:D2} and on \ref{M2}. Its provenance is a single pair of programs (enumeration and extension, then the three-disjoint test) with a four-input control harness; a second generator later regenerated both candidate sets with the same counts, $98{,}622$ and $230{,}983$, but the final three-disjoint test was not re-executed on them, and an earlier attempt to replay the enumeration on another machine terminated before producing a result. We have not repeated this computation.
\item\label{M2} $\sle{6}(\C)=24$; equivalently, the longest sequence over $\C$ without a nonempty zero-sum subsequence of length at most six has length $23$. The archived value comes from a single exhaustive search program over normalized sequences (rank three) together with a direct enumeration inside $C_5^2$ (rank at most two). For this paper the value was recomputed by an independently written program of a different design, which also reproduces the neighbouring values $\sle{7}(\C)=19$ and $\sle{8}(\C)=18$ recorded in the archive (Section~\ref{subsec:thresholds}).
\item\label{M3} $\sle{7}(\C)=19$, used only in the second route of Proposition~\ref{prop:D2}; same provenance and recomputation as \ref{M2}.
\end{enumerate}

\subsection{Reduction of part (b) to part (a)}

\begin{lemma}\label{lem:corridor}
Assume \ref{M1} and \ref{M2}. Then $\Dk_4(\C)\in\{30,31\}$.
\end{lemma}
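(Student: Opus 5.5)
The lower bound $\Dk_4(\C)\ge30$ is \eqref{eq:lower} at $k=4$, so only $\Dk_4(\C)\le31$ needs proof. I would obtain it from the recurrence Lemma~\ref{lem:recurrence} with $k=3$ and $\ell=6$, feeding in the two machine values. \ref{M1} gives $\Dk_3(\C)\le25$ and \ref{M2} gives $\sle{6}(\C)\le 25$ (indeed $=24$). Then
\[
\Dk_4(\C)\ \le\ \max\{\Dk_3(\C)+6,\ \sle{6}(\C)-1\}\ \le\ \max\{31,\,24\}\ =\ 31 .
\]
Combined with the lower bound, this gives $\Dk_4(\C)\in\{30,31\}$.

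The plan is to check the hypotheses of Lemma~\ref{lem:recurrence} for these parameters and to note that only the upper bounds from \ref{M1} and \ref{M2} are used. I would also observe that the choice $\ell=6$ is forced if one wants this bound. With $\ell=5$ the first term would be $30$, which is good, but the second term would be $\eta(\C)-1=32$, which is worse. With $\ell=7$ the first term is $32$. So $\ell=6$ gives the best bound from the recurrence, namely $31$. This explains why a corridor of width two remains and why part~(a), which concerns zero-sum sequences of length $31$ with a zero-sum subsequence of length at most five, is what is needed to close it. Via Lemma~\ref{lem:Dk-char}, that closing step is presumably handled afterwards.

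There is no real obstacle here. The only care needed is the arithmetic and the direction of the inequalities in the recurrence. In particular, one must confirm that the lemma is applied with $\ell=6\ge1$ and that the maximum is taken over the two upper bounds. The substantive content sits in the machine inputs \ref{M1} and \ref{M2}, which are assumed.
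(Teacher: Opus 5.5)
Your proposal is correct and is exactly the paper's argument: the lower bound is \eqref{eq:lower} at $k=4$, and the upper bound is Lemma~\ref{lem:recurrence} with $k=3$ and $\ell=6$, giving $\Dk_4(\C)\le\max\{25+6,\,24-1\}=31$. Your remark that $\ell=5$ (second term $\eta(\C)-1=32$) and $\ell=7$ (first term $32$) do worse is also correct.
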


\begin{proof}
Lemma~\ref{lem:recurrence} with $k=3$ and $\ell=6$ gives $\Dk_4(\C)\le\max\{25+6,\,24-1\}=31$; the lower bound is \eqref{eq:lower}.
\end{proof}

\begin{lemma}\label{lem:reduction}
Assume \ref{M1}. If $\Dk_4(\C)=31$, then there is a zero-sum sequence over $\C$ of length $31$ with no nonempty zero-sum subsequence of length at most five.
\end{lemma}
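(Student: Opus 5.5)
We use the characterization of Lemma~\ref{lem:Dk-char}. If $\Dk_4(\C)=31$, then Lemma~\ref{lem:Dk-char} with $k=4$ gives a zero-sum sequence $B$ over $\C$ with $|B|=31$ and $\zz(B)\le4$. The plan is to show that such a $B$ can have no nonempty zero-sum subsequence of length at most five, so that $B$ itself is the required sequence.

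Suppose, for contradiction, that $B$ has a nonempty zero-sum subsequence $A$ with $|A|\le5$. Let $C=B\cdot A^{-1}$. Then $C$ is zero-sum, because both $B$ and $A$ are, and $|C|\ge26$.

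We now use \ref{M1} in the form $\Dk_3(\C)=25$; only the upper bound $\Dk_3(\C)\le25$ is needed. Remove one element $h$ from $C$. What remains has length at least $25$, so it contains three pairwise disjoint nonempty zero-sum subsequences $V_1,V_2,V_3$. Their complement $W$ in $C$ contains $h$, so it is nonempty. It is also zero-sum, since $C,V_1,V_2,V_3$ all are.

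This gives a decomposition $B=A\cdot V_1\cdot V_2\cdot V_3\cdot W$ into five pairwise disjoint nonempty zero-sum subsequences. Hence $\zz(B)\ge5$, contradicting $\zz(B)\le4$. So $B$ has no short zero-sum subsequence, which proves the lemma.

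The argument is essentially the same complement trick already used in the proofs of Lemma~\ref{lem:Dk-char} and Proposition~\ref{prop:D2}. The only point needing care is the length count: $31-5-1=25\ge\Dk_3(\C)$. This is exactly where the bound $|A|\le5$ is used, and it is why the lemma concerns length $31$ and subsequences of length at most five.
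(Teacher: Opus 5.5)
Your proof is correct and is essentially the paper's argument. The paper takes a shortest nonempty zero-sum subsequence $A$ and applies the bound $|B|\le\Dk_3(\C)$ for zero-sum $B$ with $\zz(B)\le3$ from Lemma~\ref{lem:Dk-char} to $S\cdot A^{-1}$, obtaining $|A|\ge6$; you instead inline the complement argument behind that direction of the lemma, and like the paper you use only the upper bound $\Dk_3(\C)\le25$ from \ref{M1}.
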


\begin{proof}
By Lemma~\ref{lem:Dk-char} there is a zero-sum sequence $S$ of length $31$ with $\zz(S)\le4$. Let $A$ be a nonempty zero-sum subsequence of $S$ of minimum length $m$. Then $S\cdot A^{-1}$ is zero-sum, and $\zz(S\cdot A^{-1})\le3$, since four disjoint zero-sum subsequences of $S\cdot A^{-1}$ together with $A$ would give $\zz(S)\ge5$. By Lemma~\ref{lem:Dk-char} and \ref{M1}, $31-m\le\Dk_3(\C)=25$, so $m\ge6$: every nonempty zero-sum subsequence of $S$ has length at least six.
\end{proof}

\begin{proof}[Proof of Theorem~\ref{thm:main}(b) from part~(a)]
By Lemma~\ref{lem:corridor}, $\Dk_4(\C)\in\{30,31\}$, and by Lemma~\ref{lem:reduction} the value $31$ would produce a sequence excluded by part~(a); hence $\Dk_4(\C)=30$. Together with Proposition~\ref{prop:D2} and \ref{M1} this gives $\Dk_k(\C)=5k+10$ for $k=2,3,4$. For $k\ge4$, Lemma~\ref{lem:recurrence} with $\ell=5$ and the value $\sle{5}(\C)=\eta(\C)=33$ (Section~\ref{subsec:saturation}) gives
\[
\Dk_{k+1}(\C)\le\max\{5k+10+5,\ 32\}=5(k+1)+10,
\]
and \eqref{eq:lower} supplies the reverse inequality. Induction on $k$ completes the proof.
\end{proof}

The rest of the paper proves part~(a), which does not depend on \ref{M1}--\ref{M3} or on Proposition~\ref{prop:D2}.

\section{Structure of a hypothetical counterexample}\label{sec:obstruction}

From now on let $S$ be a zero-sum sequence over $G=\C$ of length $31$ that is short-zero-free; we derive constraints on $S$ that will eventually exclude it. Since $0$ is a zero-sum subsequence of length one and $5g=0$ for every $g$, every element of $S$ is nonzero and has multiplicity at most four. In particular $|\supp(S)|\ge8$, because $4\cdot7=28<31$.

\subsection{Saturation and the multiplicity set $\{1,2,4\}$}\label{subsec:saturation}

We use two results of Gao, Hou, Schmid and Thangadurai \cite{GaoHouSchmidThangadurai2007} on $\C$, also recorded in \cite[Lemma~4.2]{FanGaoWangZhongZhuang2012}. Their Theorem~1.7 gives $s(C_n^3)=\eta(C_n^3)+n-1=9n-8$ for $n=3^a5^b$, hence $\eta(\C)=33$; and their Proposition~5.6 proves that $\C$ has Property~D, which, as they note, implies \emph{Property~C}: every short-zero-free sequence over $\C$ of length $32=\eta(\C)-1$ has the form $T^4$ for a sequence $T$ of eight distinct elements (distinct, because an element repeated in $T$ would have multiplicity eight in $T^4$). We also use $\eta(C_5^2)=13$, an instance of $\eta(C_n^2)=3n-2$ \cite{GaoGeroldinger2006}.

Call a short-zero-free sequence $S'$ \emph{saturated} if $S'\cdot g$ is not short-zero-free for every $g\in G$.

\begin{lemma}\label{lem:saturated}
If $|\supp(S)|\ge9$, then $S$ is saturated.
\end{lemma}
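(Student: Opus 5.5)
The plan is to argue by contradiction using Property~C. Suppose $|\supp(S)|\ge9$ and that $S$ is not saturated. Then there is some $g\in G$ for which $S\cdot g$ is still short-zero-free. This sequence has length $32=\eta(\C)-1$. By Property~C, recorded above from \cite{GaoHouSchmidThangadurai2007}, we get $S\cdot g=T^4$ for a sequence $T$ of eight distinct elements. In particular $|\supp(S\cdot g)|=8$.

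Next I compare supports. Removing the single term $g$ from $T^4$ lowers one multiplicity from four to three. It leaves the support unchanged, so $\supp(S)=\supp(T^4)=\supp(T)$, which has exactly eight elements. This contradicts $|\supp(S)|\ge9$, so $S$ is saturated.

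I expect no real obstacle. The only points to check are small ones.
\begin{enumerate}[label=\textup{(\roman*)},leftmargin=2em]
\item The hypothesis of Property~C is met: the sequence $S\cdot g$ is short-zero-free of length exactly $\eta(\C)-1=32$.
\item The element $g$ necessarily occurs in $T$, because $S\cdot g=T^4$, and it occurs there with multiplicity four. After removing one copy of $g$, every element of $T$ still appears in $S$ with multiplicity at least three.
\end{enumerate}
The bound on $|\supp(S)|$ then follows immediately. The zero-sum property of $S$ and the count $|S|=31$ are used only to place $S\cdot g$ at length $32$.
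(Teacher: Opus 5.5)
Your proof is correct and is essentially the paper's argument: a short-zero-free $S\cdot g$ of length $32=\eta(\C)-1$ must be $T^4$ with $|\supp(T)|=8$ by Property~C, contradicting $|\supp(S)|\ge9$. The paper phrases this as $\supp(S\cdot g)\supseteq\supp(S)$ having at least nine elements, while you use $\supp(S)=\supp(T)$; only the length $|S|=31$ matters here, not the zero-sum property of $S$.
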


\begin{proof}
If $S\cdot g$ were short-zero-free for some $g$, it would be a short-zero-free sequence of length $32$ with at least nine distinct elements, contradicting Property~C.
\end{proof}

\begin{lemma}[saturation defect]\label{lem:defect}
Let $p$ be an odd prime, $H$ an elementary abelian $p$-group, and $S'$ a saturated sequence over $H$ without nonempty zero-sum subsequences of length at most $p$. If $x\in\supp(S')$ has multiplicity $m<p$, then $S'$ has a subsequence $R$ with $|R|\le p-1-m$, $x\notin\supp(R)$ and $\sigma(R)=-(m+1)x$.
\end{lemma}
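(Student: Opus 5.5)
The natural element to append is $g=x$ itself. Since $S'$ is saturated, $S'\cdot x$ is not short-zero-free. So it has a nonempty zero-sum subsequence $Z$ with $|Z|\le p$. Because $S'$ has no such subsequence, $Z$ must use the new copy of $x$. Write $Z=x^{j}\cdot R$ with $x\notin\supp(R)$ and $1\le j\le m+1$. Here $j\le m+1$ because $S'\cdot x$ contains exactly $m+1$ copies of $x$, and $j\ge1$ by the previous remark. Moreover $R$ is a subsequence of $S'$, since it avoids $x$ entirely. Then $\sigma(R)=-jx$ and $|R|\le p-j$.

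The key step is to force $j=m+1$. Suppose instead $j\le m$. Then $x^j\cdot R$ is already a subsequence of $S'$. It is nonempty, zero-sum and of length at most $p$, which contradicts the hypothesis on $S'$. Hence $j=m+1$. This gives $\sigma(R)=-(m+1)x$ and $|R|\le p-(m+1)=p-1-m$, which is exactly the claim.

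Two boundary points remain to check. First, $m<p$ guarantees $m+1\le p$, so the length bound is meaningful. If $m=p-1$, the bound reads $|R|\le 0$, so $R$ is empty and the condition becomes $\sigma(R)=0=-px$; this holds automatically in an elementary abelian $p$-group, so the case is consistent rather than contradictory. Second, the oddness of $p$ and the elementary abelian structure are not needed beyond $px=0$. I expect the only real subtlety to be the bookkeeping that $Z$ contains no $x$ from outside the counted $m+1$ copies. That is immediate, since $R$ is defined as the part of $Z$ avoiding $x$. There is no genuine obstacle.
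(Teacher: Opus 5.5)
Your proof is correct and follows the paper's argument. You append $x$, observe that the resulting short zero sum must use the new copy, and then show it must also use all $m$ original copies, since otherwise it already lies in $S'$; your case $j\le m$ is exactly the paper's ``replace the appended copy by an unused original copy'' step.
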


\begin{proof}
Since $S'$ is saturated, $S'\cdot x$ has a nonempty zero-sum subsequence $U$ of length at most $p$, and $U$ contains the appended copy of $x$ because $S'$ itself has none. $U$ also contains every original copy of $x$: otherwise, replacing the appended copy in $U$ by an unused original copy would give a zero-sum subsequence of $S'$ of length at most $p$. Deleting the $m+1$ copies of $x$ from $U$ leaves the required $R$.
\end{proof}

\begin{corollary}\label{cor:mult}
If $|\supp(S)|\ge9$, then every element of $S$ has multiplicity $1$, $2$ or $4$.
\end{corollary}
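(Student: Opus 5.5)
The plan is to combine the a priori bounds on multiplicities with the saturation defect Lemma~\ref{lem:defect} and to exclude multiplicity $3$ directly. By the remarks at the start of Section~\ref{sec:obstruction}, every element of $S$ is nonzero and has multiplicity between $1$ and $4$. So the only thing left to rule out is an element of multiplicity exactly $3$.

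Assume $|\supp(S)|\ge9$. Lemma~\ref{lem:saturated} then says that $S$ is saturated. Now $S$ is a saturated, short-zero-free sequence over $H=\C$, an elementary abelian $5$-group with $p=5$ odd, so Lemma~\ref{lem:defect} applies to $S'=S$.

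Suppose some $x\in\supp(S)$ has multiplicity $m=3<5$. Lemma~\ref{lem:defect} gives a subsequence $R$ of $S$ with $|R|\le 5-1-3=1$, $x\notin\supp(R)$, and
\[
\sigma(R)=-4x=x .
\]
There are only two possibilities for $R$.
\begin{itemize}
\item If $R$ is empty, then $\sigma(R)=0$, so $x=0$. This contradicts that all elements of $S$ are nonzero.
\item If $R=y$ is a single element, then $y=\sigma(R)=x$. This contradicts $x\notin\supp(R)$.
\end{itemize}
So no element of $S$ has multiplicity $3$, and every multiplicity lies in $\{1,2,4\}$.

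There is no real obstacle. The only care needed is the arithmetic $-(m+1)x=-4x=x$ in $\C$, and the observation that the cases $m=1$ and $m=2$ give no contradiction, since there $R$ may have length up to $3$ or $2$. This is why the conclusion is $\{1,2,4\}$ rather than something smaller.
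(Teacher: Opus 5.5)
Your proposal is correct and follows essentially the same route as the paper: saturation via Lemma~\ref{lem:saturated}, then Lemma~\ref{lem:defect} with $m=3$ forcing $|R|\le1$ and $\sigma(R)=-4x=x$, which is impossible in both the empty and the one-element case. Multiplicities $\ge5$ are excluded by $5x=0$, as in the paper.
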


\begin{proof}
Multiplicity $3=p-2$ is impossible: Lemma~\ref{lem:defect} would give $R$ with $|R|\le1$ and $\sigma(R)=-4x=x$, so either $R$ is empty and $x=0$, or $R=y$ with $y=x\in\supp(R)$; both are excluded. Multiplicities $\ge5$ are excluded because $5x=0$.
\end{proof}

\subsection{The multiplicity grammar}

Write $a_1$, $b_2$, $c_4$ for the numbers of elements of $S$ with multiplicity $1$, $2$, $4$, and $s=|\supp(S)|$. When $s\ge9$, Corollary~\ref{cor:mult} gives
\begin{equation}\label{eq:grammar}
a_1+b_2+c_4=s,\qquad a_1+2b_2+4c_4=31,
\end{equation}
hence $b_2=31-s-3c_4$ and $a_1=2s-31+2c_4$; in particular $a_1$ is odd, so $a_1\ge1$. Nonnegativity of $a_1$ and $b_2$ determines, for each $s$, the admissible triples $(a_1,b_2,c_4)$, which we call \emph{patterns}. For $14\le s\le31$ there are exactly $60$ patterns: $42$ with $s\le22$ and $18$ with $s\ge23$ (Table~\ref{tab:counts}). For $s\ge23$ every pattern has $c_4\le2$, because $3c_4\le31-s\le8$.

\subsection{A projective-line restriction}

The nonzero elements of $G$ fall into $31$ \emph{projective lines} $\{x,2x,3x,4x\}$. Restricting $S$ to one line gives a multiplicity vector in $\{0,1,2,4\}^4$ with respect to $(x,2x,3x,4x)$; such a vector is admissible if it carries no nonempty zero-sum sub-multiset of size at most five, i.e.\ no $(t_1,t_2,t_3,t_4)$ with $0<t_1+t_2+t_3+t_4\le5$, $t_j\le$ the $j$-th multiplicity, and $t_1+2t_2+3t_3+4t_4\equiv0\pmod5$. A direct enumeration of the $4^4=256$ vectors shows that exactly $21$ are admissible. The next lemma can be read off from that list, or checked by hand from the four short zero sums $x^3\cdot2x$, $x\cdot x\cdot3x$, $x\cdot4x$ and $x\cdot2x\cdot2x$: the first three exclude a second occupied point on the line of an element of multiplicity four, and the last two together with $x\cdot4x$ exclude two elements of multiplicity at least two on one line.

\begin{lemma}\label{lem:line}
An element of multiplicity four in $S$ is the only element of $S$ on its projective line, and two elements of $S$ of multiplicity at least two are never on a common line. Consequently any two elements of multiplicity at least two are linearly independent.
\end{lemma}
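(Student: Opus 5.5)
The plan is to check each claim directly against a short zero-sum sequence supported on one projective line. No enumeration of the $21$ admissible vectors is needed. Two facts from the start of Section~\ref{sec:obstruction} are used throughout: every element of $S$ is nonzero, and $S$ has no nonempty zero-sum subsequence of length at most five.

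\emph{Reduction to a fixed element.} Multiplication by a unit $\lambda\in(\mathbb{Z}/5\mathbb{Z})^\times$ is an automorphism of $G$. It preserves lengths and zero-sums, and it maps each projective line to itself. So in each claim I may rename a chosen element of $S$ as $x$. The other points of its line are then $2x$, $3x$, $4x$, and a would-be second element of $S$ on that line is $jx$ with $j\in\{2,3,4\}$.

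\emph{First claim.} Let $v_x(S)=4$ and suppose $jx\in\supp(S)$. Each value of $j$ gives a short zero-sum subsequence of $S$:
\begin{itemize}
\item for $j=2$, the subsequence $x^3\cdot 2x$, of length $4$ and sum $5x=0$;
\item for $j=3$, the subsequence $x^2\cdot 3x$, of length $3$;
\item for $j=4$, the subsequence $x\cdot 4x$, of length $2$.
\end{itemize}
These use at most three copies of $x$, so they are available, and each contradicts short-zero-freeness. Hence an element of multiplicity four is alone on its line.

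\emph{Second claim.} Let $x\neq y$ both have multiplicity at least two, with $y=jx$. Again each $j$ gives a short zero-sum subsequence:
\begin{itemize}
\item for $j=4$, the subsequence $x\cdot 4x$;
\item for $j=3$, the subsequence $x^2\cdot 3x$, which needs only $v_x\ge2$;
\item for $j=2$, the subsequence $x\cdot(2x)^2$, of sum $5x$, which needs only $v_{2x}\ge2$.
\end{itemize}
Each has length at most three, a contradiction.

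\emph{Consequence.} In the $\mathbb{F}_5$-vector space $G$, two nonzero vectors are linearly dependent exactly when one is a scalar multiple of the other, that is, when they lie on a common projective line. So the second claim says that any two distinct elements of multiplicity at least two are linearly independent.

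The only point needing care is making sure each exhibited subsequence uses no more copies than are present. The cases above are arranged so that every one uses only the multiplicity already assumed. There is no genuine obstacle beyond this.
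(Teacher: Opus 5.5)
Your proof is correct and takes essentially the paper's approach. The paper says the lemma can be read off from the enumeration of the $21$ admissible line states, or checked by hand from exactly the four short zero sums $x^3\cdot 2x$, $x\cdot x\cdot 3x$, $x\cdot 4x$ and $x\cdot 2x\cdot 2x$ that you use (your unit-scaling reduction is harmless but unnecessary, since simply naming the chosen element $x$ already gives its line as $\{x,2x,3x,4x\}$).
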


\subsection{Rank}

\begin{lemma}\label{lem:rank}
$\rank(S)=3$. Moreover, if $H$ is a subsequence of $S$ whose support lies in a $2$-dimensional subspace, then $|H|\le12$.
\end{lemma}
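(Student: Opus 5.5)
The plan is to prove the second assertion first and derive $\rank(S)=3$ from it. Let $H$ be a subsequence of $S$ whose support lies in a $2$-dimensional subspace $V\cong C_5^2$. Then $H$ is a sequence over $V$, and it has no nonempty zero-sum subsequence of length at most five, because any such subsequence would be one of $S$. By definition $\eta(V)=\sle{5}(V)$, and the value $\eta(C_5^2)=13$ recorded in Section~\ref{subsec:saturation} then forces $|H|\le12$. This step is immediate.

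For the rank, suppose $\rank(S)\le2$. Then $\supp(S)$ lies in some $2$-dimensional subspace; if the rank is smaller, enlarge the span to any plane containing it. Taking $H=S$ gives $31=|S|\le12$, which is absurd. Hence $\rank(S)=3$.

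I see no real obstacle here; the only point to watch is that the short-zero-free property passes to subsequences and to the subgroup $V$. This holds trivially, since zero sums in $V$ are zero sums in $G$.
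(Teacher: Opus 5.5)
Your proposal is correct and follows the paper's proof: you bound any planar subsequence by $\eta(C_5^2)-1=12$ because it is short-zero-free over a copy of $C_5^2$, and then apply this to $S$ itself, using $31>12$. Enlarging a lower-dimensional span to a plane is a harmless detail that the paper leaves implicit.
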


\begin{proof}
A subsequence of $S$ supported in a $2$-dimensional subspace is a short-zero-free sequence over a group isomorphic to $C_5^2$, hence has length less than $\eta(C_5^2)=13$. Applied to $S$ itself, $31>12$ forces $\rank(S)=3$.
\end{proof}

\subsection{Exclusion of supports at most $13$}

\begin{proposition}\label{prop:small-support}
$|\supp(S)|\ge14$.
\end{proposition}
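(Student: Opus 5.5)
The plan is to split by the exact support size $s=|\supp(S)|$, which by the remark opening this section already satisfies $s\ge8$. For each $s$ with $8\le s\le13$, I will pin down the multiplicities, then use Lemmas~\ref{lem:line}, \ref{lem:rank} and \ref{lem:defect} to force a rigid geometry, and finish with a small exhaustive check. The point of taking $s\le13$ separately is that all of these cases are dominated by elements of high multiplicity.

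\emph{The case $s=8$.} Here $31=\sum v_g(S)\le 4\cdot 8=32$, so seven elements have multiplicity $4$ and one element $x$ has multiplicity $3$. Corollary~\ref{cor:mult} is unavailable, since saturation needs $s\ge9$, so I will argue directly.

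The short zero sums $x^3\cdot 2x$, $x\cdot x\cdot 3x$ and $x\cdot 4x$ show that $x$ is also alone on its projective line, just as for elements of multiplicity four. Next consider $S\cdot x$, of length $32$.
\begin{itemize}
\item If $S\cdot x$ is short-zero-free, then Property~C gives $S\cdot x=T^4$ with $T$ consisting of eight distinct elements.
\item If it is not, a short zero-sum of $S\cdot x$ must use all four copies of $x$ (by the exchange argument of Lemma~\ref{lem:defect}). This leaves a single term $y\in\supp(S)\setminus\{x\}$ with $y=-4x=x$, which is a contradiction.
\end{itemize}
So $S=T^4x^{-1}$ with $\sigma(S)=4\sigma(T)-x=0$, that is, $x=-\sigma(T)$. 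I would then invoke the known structure of the extremal sequences for $\eta(\C)$ \cite{GaoHouSchmidThangadurai2007,FanGaoWangZhongZhuang2012}. If that structure is not explicit enough, I would instead enumerate $T$ up to $\mathrm{GL}(3,5)$ directly. In either case I expect to show that $-\sigma(T)$ and $x\in T$ cannot both hold, or that $T^4x^{-1}$ has a short zero sum.

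\emph{The cases $9\le s\le13$.} Here Corollary~\ref{cor:mult} and \eqref{eq:grammar} apply. Nonnegativity of $a_1$ and $b_2$ leaves very few patterns; for example, $s=9$ forces $(a_1,b_2,c_4)=(1,1,7)$. In every such pattern the number $b_2+c_4$ of elements of multiplicity at least two is large, at least about $s-5$.

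By Lemma~\ref{lem:line} these elements are pairwise independent, so they are distinct points of $\mathrm{PG}(2,5)$. By Lemma~\ref{lem:rank}, each projective line (a plane of $G$) carries total weight at most $12$. In particular a plane contains at most three elements of multiplicity four, and only a bounded number of elements of multiplicity two.

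Since $\rank(S)=3$, I would normalize three independent elements of multiplicity four (or at least two) to $e_1,e_2,e_3$. I would then further constrain the remaining points with saturation defects from Lemma~\ref{lem:defect}:
\begin{itemize}
\item each element $y$ of multiplicity $2$ requires one or two other terms summing to $2y$;
\item each element $z$ of multiplicity $1$ requires at most three other terms summing to $-2z$.
\end{itemize}
Together with the zero-sum condition $\sigma(S)=0$ and the short-zero-free test, this gives a small finite search over point sets in $\mathrm{PG}(2,5)$ with prescribed multiplicities. I expect it to return no survivor.

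\emph{Main obstacle.} The hard step is the residual search for $9\le s\le13$. Pure counting (double-counting weights over the $31$ lines, where each point lies on six lines) gives only $186\le372$, so it is far from contradictory. The contradiction must therefore come from the actual short zero sums among elements of multiplicity four, such as $g^4\cdot h$-type or $g^2h^2\cdot k$-type relations.

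I would first try to prove by hand that four elements of multiplicity at least two in general position already force a short zero sum. This would bound $b_2+c_4\le$ a small number and kill all patterns at once. Failing that, I would run the normalized enumeration, which is tiny compared with the $78$-branch search of Section~\ref{sec:cover}.
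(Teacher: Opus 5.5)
Your plan follows the paper's proof. For $s=8$ you show that $S\cdot x$ is short-zero-free, use Property~C to get $S=T^4x^{-1}$ with $x=-\sigma(T)\in T$, and exclude this by enumerating the extremal $8$-sets; the paper finds $564$ such sets containing $e_1,e_2,e_3$, none with $-\sigma(T)\in T$. For $9\le s\le13$ you use Corollary~\ref{cor:mult} and \eqref{eq:grammar}, normalize three independent high elements, and run an exact search. The saturation-defect pruning you add is valid but not needed.

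Three points need correcting.

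First, the hand shortcut in your last paragraph is false. The sequence $e_1^4e_2^4e_3^4(e_1+e_2+e_3)^4$ is short-zero-free: a zero sum $ae_1+be_2+ce_3+d(e_1+e_2+e_3)$ with $0\le a,b,c,d\le4$ forces $d\ge1$ and $a=b=c=5-d$, so its length is $15-2d\ge7$. Four elements of multiplicity four in general position therefore do not force a short zero sum. No bound on $b_2+c_4$ of that kind exists, and the exhaustive search is the argument itself, not a fallback.

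Second, in the case $s=8$ your alternative that $T^4x^{-1}$ has a short zero sum cannot occur. That sequence is a subsequence of the short-zero-free $T^4$. What must be checked is exactly $-\sigma(T)\notin T$.

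Third, only $c_4\ge4$ guarantees three independent elements of multiplicity four, and $(1,9,3)$ is the one pattern with $c_4<4$. For it you must fix the seeds by an explicit split, as the paper does in Case~(L3). Either the three elements of multiplicity four span $G$, or they lie in a plane $P$. In the plane case no element of multiplicity two lies in $P$, since that would put $14>12$ terms in $P$, and one such element outside $P$ is normalized to $e_3$. Also, the search must run over vectors of $G$ rather than points of $\mathrm{PG}(2,5)$. Sums depend on the chosen representative, and elements of multiplicity one may share a projective line with other elements.
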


This is the one place where part~(a) rests on earlier archived computations rather than on the search of Section~\ref{sec:search}; we describe them so that their scope is clear. Supports at most seven are impossible, as noted above. For support $8$ the pattern is $4^7\,3$; if $x$ is the element of multiplicity three, then $S\cdot x$ is again short-zero-free (a new short zero sum would have to contain all four copies of $x$, hence be $x^4\cdot y$ with $y=-4x=x$, which needs a fifth copy), so by Property~C $\supp(S)=\supp(S\cdot x)$ is one of the extremal $8$-sets $T$ with $T^4$ short-zero-free, and $\sigma(S)=0$ forces $x=-\sigma(T)\in T$. An exhaustive enumeration of these sets in normalized form (support containing $e_1,e_2,e_3$) finds $564$ of them and none with $-\sigma(T)\in T$. For supports $9$ and $10$, Corollary~\ref{cor:mult} leaves the patterns $4^7\,2\,1$, $4^7\,1^3$ and $4^6\,2^3\,1$; four elements of multiplicity four contribute $16>12$ terms, so by Lemma~\ref{lem:rank} three of them are linearly independent and may be normalized to the standard basis, after which an exact search of the kind described in Section~\ref{sec:search} finds no completion. For supports $11$--$13$ the nine admissible patterns are $(1,5,5)$, $(3,2,6)$, $(1,7,4)$, $(3,4,5)$, $(5,1,6)$, $(1,9,3)$, $(3,6,4)$, $(5,3,5)$ and $(7,0,6)$ in the notation $(a_1,b_2,c_4)$; all but $(1,9,3)$ have $c_4\ge4$ and are normalized as above, while $(1,9,3)$ splits according to whether its three elements of multiplicity four span a plane or the whole space, as in Case~(L3) of Section~\ref{sec:cover} (the archived run of the plane case fixes only the three elements of multiplicity four; the re-run described next also normalizes an outside element of multiplicity two, as in (L3)). Two exact engines with different state representations agree that every one of these branches has no completion. For this paper the supports $9$--$13$ searches were re-run by the third implementation of Section~\ref{subsec:engines}, which also found no completion, and the archived support-$8$ program was re-executed with its published expected values reproduced.

\section{The branch cover}\label{sec:cover}

Fix a pattern $(a_1,b_2,c_4)$ with $14\le s\le31$. Let $H$ be the subsequence of $S$ formed by the elements of multiplicity two or four (the \emph{high} part), with $h=b_2+c_4$ support points and length $M_H=2b_2+4c_4$. The group $\mathrm{GL}(3,5)$ acts on sequences over $G$, preserving lengths, sums, multiplicities, ranks and short zero sums; we may therefore replace $S$ by any image under this action. A \emph{branch} is a choice of three linearly independent elements of $\supp(S)$ with prescribed multiplicities, normalized to the standard basis $e_1,e_2,e_3$, together with a constraint on where the remaining elements of multiplicity two may lie. The following two propositions show that every $S$ with $14\le s\le31$ satisfies the hypotheses of at least one of $78$ branches. They only use Lemmas~\ref{lem:line} and~\ref{lem:rank}; in particular, whenever the elements of $H$ have rank three, linear algebra (extension of an independent set to a basis) provides a basis inside $\supp(H)$ whose multiplicity profile is determined by $c_4$, because any two elements of $H$ are independent and elements of multiplicity four are extended first.

\begin{proposition}[supports $14$--$22$: $51$ branches]\label{prop:lower}
Let $14\le s\le22$.
\begin{enumerate}[label=\textup{(L\arabic*)},leftmargin=3em]
\item If $c_4\le2$, then $M_H=31-a_1\ge14$, so by Lemma~\ref{lem:rank} $\rank(H)=3$, and $\supp(H)$ contains a basis with multiplicity profile $(2,2,2)$, $(4,2,2)$ or $(4,4,2)$ according as $c_4=0,1,2$. Normalize it to $(e_1,e_2,e_3)$. (One branch per pattern.)
\item If $c_4\ge4$, then the elements of multiplicity four alone contribute $16>12$ terms, so three of them are independent; normalize them to $(e_1,e_2,e_3)$. (One branch per pattern.)
\item If $c_4=3$, the three elements of multiplicity four contribute $12$ terms, and their rank is either three or two. In the first case normalize them to $(e_1,e_2,e_3)$. In the second case they span a plane $P$; normalize two of them to $e_1,e_2$, so that the third is $\alpha e_1+\beta e_2$ with $\alpha,\beta\ne0$ by Lemma~\ref{lem:line}. No element of multiplicity two lies in $P$, since it would raise the number of terms of $S$ supported in $P$ to $14>12$; so if $b_2>0$ some element of multiplicity two lies outside $P$, and if $b_2=0$ some element of multiplicity one lies outside $P$ by Lemma~\ref{lem:rank}. Normalize that outside element to $e_3$ (the pointwise stabilizer of $e_1,e_2$ acts transitively on the vectors outside $P$), with multiplicity $2$ or $1$ respectively. (Two branches per pattern.)
\end{enumerate}
Among the $42$ patterns, $9$ have $c_4=3$, giving $42+9=51$ branches.
\end{proposition}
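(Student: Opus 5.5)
The proposition asserts three things. First, the case split (L1)--(L3) covers every pattern with $14\le s\le22$. Second, in each case the indicated normalization is available. Third, the count is $51$. I would verify these in that order, using only Corollary~\ref{cor:mult}, Lemma~\ref{lem:line}, Lemma~\ref{lem:rank} and the transitivity of $\mathrm{GL}(3,5)$ on ordered bases.

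\emph{Coverage.} Since $c_4\in\{0,1,2\}\cup\{3\}\cup\{c_4\ge4\}$, the three cases are exhaustive.

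\emph{Case (L1).} Here $a_1+M_H=31$ by \eqref{eq:grammar}, so $M_H\ge14$ reduces to $a_1\le17$. From $a_1=2s-31+2c_4\le 44-31+4=17$ for $s\le22$, $c_4\le2$, this holds. Then Lemma~\ref{lem:rank} gives $\rank(H)=3$.

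Next I build the basis. Start with the $c_4$ elements of multiplicity four; by Lemma~\ref{lem:line} any two elements of $H$ are independent, so these already form an independent set of size $c_4\le2$. Extend this set to a basis using elements of $\supp(H)$, which is possible because $\supp(H)$ spans $G$. The added elements have multiplicity two: all multiplicity-four elements were already used, and for $c_4=2$ one element suffices. This yields the profiles $(2,2,2)$, $(4,2,2)$, $(4,4,2)$. An ordered basis can be sent to $(e_1,e_2,e_3)$ by an element of $\mathrm{GL}(3,5)$.

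\emph{Case (L2).} The multiplicity-four elements contribute $4c_4\ge16>12$ terms, so by Lemma~\ref{lem:rank} they are not contained in a plane. Hence three of them are independent; normalize them.

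\emph{Case (L3).} The three multiplicity-four elements either span $G$ (normalize as before) or span a plane $P$; they cannot be collinear by Lemma~\ref{lem:line}.

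\begin{itemize}
\item[$\bullet$] In the plane case, map two of them to $e_1,e_2$. The third is $\alpha e_1+\beta e_2$, and if $\alpha=0$ or $\beta=0$ it would lie on a line with $e_2$ or $e_1$, contradicting Lemma~\ref{lem:line}.
\item[$\bullet$] A multiplicity-two element in $P$ would give $12+2=14>12$ terms in $P$, contradicting Lemma~\ref{lem:rank}. So if $b_2>0$, every multiplicity-two element lies outside $P$.
\item[$\bullet$] If $b_2=0$, then $\rank(S)=3$ from Lemma~\ref{lem:rank} forces some element outside $P$, necessarily of multiplicity one.
\end{itemize}

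The matrices fixing $e_1,e_2$ send $e_3\mapsto v$ for any $v\notin P$, since $(e_1,e_2,v)$ is a basis. So the outside element can be normalized to $e_3$ without disturbing the rest. Each such pattern thus yields two branches, and I would note that the plane branch is split by whether $b_2>0$.

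\emph{Count.} This is the main bookkeeping obstacle. I would tabulate, for $s=14,\dots,22$, the triples with $a_1=2s-31+2c_4\ge0$ and $b_2=31-s-3c_4\ge0$, i.e.
\[
\max\!\Bigl(0,\Bigl\lceil\tfrac{31-2s}{2}\Bigr\rceil\Bigr)\le c_4\le\Bigl\lfloor\tfrac{31-s}{3}\Bigr\rfloor.
\]
Summing over $s$ should reproduce the $42$ patterns of Table~\ref{tab:counts}. The patterns with $c_4=3$ are those with $s\le22$ and, for $s=14,15$, $c_4\ge\lceil(31-2s)/2\rceil$; counting them should give $9$, namely $s=14,\dots,22$ each admitting $c_4=3$. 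The total is then $42+9=51$.
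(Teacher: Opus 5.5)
Your proposal is correct and follows the paper's own justification essentially verbatim: the bound $a_1\le17$ in (L1), the $16>12$ plane bound in (L2), and Lemma~\ref{lem:line} together with the $14>12$ plane bound in (L3). Two bookkeeping points need tightening. First, the per-$s$ counts $4+5+6+5+5+5+4+4+4=42$ should actually be written out. Second, in (L3) there is no further split of the plane branch, because $b_2=22-s$ is fixed by the pattern; only $s=22$ has $b_2=0$.
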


\begin{proposition}[supports $23$--$31$: $27$ branches]\label{prop:upper}
Let $23\le s\le31$, so that $c_4\le2$.
\begin{enumerate}[label=\textup{(U\arabic*)},leftmargin=3em]
\item If $h\ge3$ and $\rank(H)=3$, then $\supp(H)$ contains a basis with multiplicity profile $(2,2,2)$, $(4,2,2)$ or $(4,4,2)$ according as $c_4=0,1,2$; normalize it to $(e_1,e_2,e_3)$.
\item If $h\ge2$ and $\rank(H)=2$, then $M_H\le12$ by Lemma~\ref{lem:rank}, two elements of $H$ (of multiplicity four first, then two) are normalized to $e_1,e_2$, every remaining element of multiplicity two lies in the plane $\langle e_1,e_2\rangle$, and, since $\rank(S)=3$, some element of multiplicity one lies outside that plane and is normalized to $e_3$. This case is void when $h\ge3$ and $M_H\ge13$ (equivalently $M_H\ge14$, as $M_H$ is even); when $h=2$ it is the only case, since $\rank(H)\le2$.
\item If $h=1$, the unique element of $H$ and two elements of multiplicity one form a basis, by Lemma~\ref{lem:rank}; normalize them to $(e_1,e_2,e_3)$.
\item If $h=0$, three elements of multiplicity one form a basis; normalize them to $(e_1,e_2,e_3)$.
\end{enumerate}
Applying (U1)--(U4) to the $18$ patterns gives $27$ branches: $12$ patterns with $h\ge3$ each give a rank-three branch, of which $9$ have $M_H\le12$ and also give a rank-two branch; $3$ patterns have $h=2$, $2$ have $h=1$ and $1$ has $h=0$.
\end{proposition}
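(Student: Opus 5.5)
The plan is to verify two things separately. First, every short-zero-free zero-sum $S$ of length $31$ with $23\le s\le31$ falls under at least one of (U1)--(U4), and each claimed normalization is legitimate. Second, the bookkeeping over the $18$ patterns produces exactly $27$ branches.

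\emph{Step 1 (the $c_4\le2$ remark and the pattern list).} From \eqref{eq:grammar} we have $3c_4\le 31-s\le8$, so $c_4\le2$. I will list the patterns with $b_2=31-s-3c_4\ge0$, together with $h=b_2+c_4$ and $M_H=2b_2+4c_4$:
\begin{itemize}
\item $s=23$: $(h,M_H)=(8,16),(6,14),(4,12)$;
\item $s=24$: $(7,14),(5,12),(3,10)$;
\item $s=25$: $(6,12),(4,10),(2,8)$;
\item $s=26$: $(5,10),(3,8)$;
\item $s=27$: $(4,8),(2,6)$;
\item $s=28$: $(3,6),(1,4)$;
\item $s=29$: $(2,4)$;
\item $s=30$: $(1,2)$;
\item $s=31$: $(0,0)$.
\end{itemize}
That is $18$ patterns. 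Of these, $12$ have $h\ge3$, and $9$ of those have $M_H\le12$. Three patterns have $h=2$, two have $h=1$, and one has $h=0$. This gives $12+9+3+2+1=27$ branches.

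\emph{Step 2 (case exhaustiveness and validity of normalizations).} The cases split on $h$ and $\rank(H)$, which is clearly exhaustive once I note that $\rank(H)\ge2$ whenever $h\ge2$; this holds because any two high elements are independent by Lemma~\ref{lem:line}.

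For (U1), I extend the elements of multiplicity four to a basis. Any two high elements are independent, so a third high element outside their span exists when $\rank(H)=3$. The choice of which elements to take first (multiplicity four before two) fixes the profile as a function of $c_4$.

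For (U2), if $\rank(H)=2$ then all of $H$ lies in a plane, so $M_H\le12$ by Lemma~\ref{lem:rank}. This is why the branch is void when $M_H\ge14$. Since $\rank(S)=3$, some element of $S$ lies outside the plane; it is not high, so it has multiplicity one. Transitivity of the pointwise stabilizer of $e_1,e_2$ on vectors outside the plane lets me normalize it to $e_3$.

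For (U3), let $x$ be the high element. Since $\rank(S)=3$, there is an element $y\notin\langle x\rangle$, necessarily of multiplicity one. Applying Lemma~\ref{lem:rank} to the plane $\langle x,y\rangle$, at most $12$ of the $31$ terms lie there, so there is a further element $z$ outside it, again of multiplicity one. (U4) is immediate from $\rank(S)=3$.

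\emph{Main obstacle.} The only delicate point is (U1)/(U2) when $h\ge3$. I must make sure that rank two of $H$ combined with $M_H\ge13$ is genuinely impossible rather than merely omitted. I also need the rank-two branch to be included for exactly the $9$ patterns with $M_H\le12$; this is where the count $9$ must be checked against the list in Step 1.
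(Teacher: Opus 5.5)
Your proposal is correct and is essentially the paper's argument, which the paper writes inline in the proposition: a case split on $h$ and $\rank(H)$, pairwise independence of high elements from Lemma~\ref{lem:line}, the plane bound $M_H\le12$ and $\rank(S)=3$ from Lemma~\ref{lem:rank}, and transitivity of the pointwise stabilizer of $e_1,e_2$ to place the outside element at $e_3$. Your explicit list of $(h,M_H)$ over the $18$ patterns is correct, and it reproduces both the count $12+9+3+2+1=27$ and the per-support numbers of Table~\ref{tab:counts}.
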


The per-support counts are listed in Table~\ref{tab:counts}. None of the counts $51$, $27$, $78$ is an input to any program: the branch list is generated from \eqref{eq:grammar} and the case distinctions above, and it was regenerated by separately written programs (Section~\ref{subsec:verification}).

\begin{table}[htbp]
\centering
\caption{Admissible multiplicity patterns and branches by support size $s$. The total of $60$ patterns and $78$ branches follows from \eqref{eq:grammar} and Propositions~\ref{prop:lower}--\ref{prop:upper}.}
\label{tab:counts}
\small
\begin{tabular}{@{}rrr@{\qquad}rrr@{}}
\toprule
$s$ & patterns & branches & $s$ & patterns & branches\\
\midrule
14 & 4 & 5 & 23 & 3 & 4\\
15 & 5 & 6 & 24 & 3 & 5\\
16 & 6 & 7 & 25 & 3 & 5\\
17 & 5 & 6 & 26 & 2 & 4\\
18 & 5 & 6 & 27 & 2 & 3\\
19 & 5 & 6 & 28 & 2 & 3\\
20 & 4 & 5 & 29 & 1 & 1\\
21 & 4 & 5 & 30 & 1 & 1\\
22 & 4 & 5 & 31 & 1 & 1\\
\midrule
$14$--$22$ & 42 & 51 & $23$--$31$ & 18 & 27\\
\bottomrule
\end{tabular}
\end{table}

\section{The exhaustive search}\label{sec:search}

\subsection{What is enumerated}\label{subsec:enumeration}

Each branch fixes a pattern, a set of \emph{seeds} (the normalized elements with their multiplicities) and possibly a plane constraint, and asks whether the seeds can be completed to a zero-sum short-zero-free sequence of that pattern. A search program enumerates the remaining elements of multiplicity four, then of multiplicity two, then of multiplicity one, each class in increasing order of the encoding fixed in Section~\ref{sec:ingredients}, so that each multiset of support points is visited once; in the rank-two branches of (U2) the remaining elements of multiplicity two are confined to the plane $\langle e_1,e_2\rangle$. (In the $69$ branches whose seeds already contain every element of multiplicity four, which is the case whenever $c_4\le2$, the programs have no multiplicity-four stage.) Elements equal to a seed are skipped because support points are distinct, and elements on the projective line of an element of multiplicity four are skipped by Lemma~\ref{lem:line}; no other symmetry reduction is applied, so the enumeration may visit several $\mathrm{GL}(3,5)$-equivalent completions, which affects running time but not completeness. When all but one element of multiplicity one have been chosen, the last one is determined by $\sigma(S)=0$; it is accepted only if it is nonzero, distinct from the chosen elements, admissible for Lemma~\ref{lem:line}, larger in the encoding than the last chosen element of multiplicity one, and does not create a short zero sum. The order condition ensures that every admissible multiset of free elements of multiplicity one is met exactly once, as the ordered list of its smaller elements followed by its largest one.

Short zero sums are detected incrementally. The state of a partial sequence records, for each weight $w\in\{0,\dots,5\}$, the set of group elements that are sums of exactly $w$ of its terms (counted with multiplicity). Adding one copy of an element $g$ replaces the weight-$w$ set by its union with the translate by $g$ of the weight-$(w-1)$ set, for $w=5,\dots,1$, and the partial sequence is rejected as soon as $0$ enters a set of positive weight. Since the state before the addition already contains no zero sum of positive weight, this test is exact: a partial sequence survives if and only if it is short-zero-free. Hence a branch reports no completion if and only if no sequence satisfying its hypotheses exists. We call a partial sequence accepted by the test a \emph{node}, and a node in which all free elements but the forced one have been chosen a \emph{leaf}.

\subsection{Three implementations and a cross-machine replay}\label{subsec:engines}

The archived computation attacks each branch with two engines built on different state representations: the first keeps one $125$-bit mask per weight and translates it by coordinate-wise shifts, while the second packs five $25$-bit coordinate planes into vector lanes with separately written cyclic shifts. Three branch-family drivers times two representations give six programs. Across the $78$ branches the $156$ runs complete, agree on the number of nodes and leaves in every branch, and report zero completions in every branch. Every branch has zero leaves: no partial sequence ever survives to the forced last element, so the theorem is decided entirely by the incremental test at the multiplicity-two and multiplicity-one stages. The largest branch is the all-singleton pattern with $s=31$, with $1{,}009{,}511{,}446$ nodes; the total over all branches is $2{,}943{,}691{,}753$ nodes. A replay on a second machine, from the same archived sources but with a different compiler major version and a different native instruction target, reproduced the results and digests exactly; this is an independent execution of the same programs, not an independent implementation.

For the present paper a third engine was written without reusing code from the two archived ones. It necessarily follows the same enumeration protocol (order of stages, skip rules, forced last element, node counting), since that protocol is part of the case analysis; its independence lies in the state representation and the zero-sum test. It stores each weight set as a $125$-bit set in two machine words and translates it by table-driven byte permutation, checking every permutation-table entry at start-up against a direct bit-by-bit translation, and it detects a new short zero sum by testing whether $-g$ already lies in the weight-$(w-1)$ set rather than by translating that set. It was validated on controls with known answers before being pointed at the $78$ branches: it finds a short-zero-free sequence of the form $T^4$ of length $32$ (which exists since $\eta(\C)=33$) and, with the forbidden zero-sum length lowered from five to one, two or three, finds zero-sum completions of test patterns, each reported sequence being re-checked by a separate brute-force routine; and it reports no completion for the supports $9$--$13$ patterns of Proposition~\ref{prop:small-support}. Run on the $78$ branches (the nine plane branches of (L3) as sixteen sub-runs each, one per in-plane third element, $213$ runs in all), it reports no completion in any branch, and its counts of nodes and leaves agree with the archived engines branch for branch, the sixteen sub-run counts summing to the archived count. All programs, result files and receipts described in this section, together with the source of this paper, accompany it as ancillary files, so that every checker can be re-run without modification.

\subsection{Second implementation of the threshold values}\label{subsec:thresholds}

A separate program of the same design decides, for $T\in\{6,7,8\}$ and a target length $L$, whether a sequence over $\C$ of length $L$ without a nonempty zero-sum subsequence of length at most $T$ exists, separately for rank three (support normalized to contain $e_1,e_2,e_3$ with nondecreasing multiplicities), rank two (support in $\langle e_1,e_2\rangle$ containing $e_1,e_2$) and rank one, with every reported sequence re-checked by brute force. For $T=6$ it finds a sequence of length $23$ and none of length $24$ in rank three, a longest sequence of length $11$ in rank two and of length $4$ in rank one; hence $\sle{6}(\C)=24$, which is \ref{M2}. For $T=7$ the corresponding maxima are $18$, $10$ and $4$, so $\sle{7}(\C)=19$, which is \ref{M3}; for $T=8$ the rank-three maximum is $17$, so $\sle{8}(\C)=18$. All three values agree with the archived spectrum, which was obtained by a differently designed program.

\subsection{Independent reconstruction of the cover and adversarial controls}\label{subsec:verification}

Completeness of the case analysis is as important as the searches. Two programs, written separately from the search engines, regenerate the cover from the mathematics alone, that is from \eqref{eq:grammar}, the primitive enumeration of admissible line states, the value $\eta(C_5^2)=13$ and the case distinctions of Propositions~\ref{prop:lower} and~\ref{prop:upper}; neither reads the archived branch list, engine sources or outputs before generating its own list. The first, from the archive, regenerates the $60$ patterns, the $21$ admissible line states and the $78$ branches identified by pattern and case, and compares them with the archived cover: no branch missing and none extra. It also rejects hostile perturbations of the analysis: omitting one branch ($77$), changing the total length in \eqref{eq:grammar} to $30$ ($56$ patterns), dropping the rank split in (L3) ($69$ branches), or weakening the threshold $\eta(C_5^2)=13$ to $15$ ($80$ branches). The second, the driver of the third engine, regenerates the $60$ patterns and the $78$ branches together with their seeds and plane constraints and finds them identical to the archived cover before any branch is run. On the result side, the archived checker that validates the result file rejects a copy in which one branch is deleted, one engine source is altered, one output digest is corrupted, the zero-completion flag is negated, or a completion count is set to one, while accepting the unmodified file; the two checkers that validate only the cover accept those copies, as they should, since the mutations do not touch the cover.

\section{Proof of Theorem~\ref{thm:main}(a)}\label{sec:proof}

Suppose $S$ is a zero-sum sequence over $\C$ of length $31$ without a nonempty zero-sum subsequence of length at most five. By Proposition~\ref{prop:small-support}, $|\supp(S)|\ge14$, so $S$ is saturated (Lemma~\ref{lem:saturated}) and every multiplicity lies in $\{1,2,4\}$ (Corollary~\ref{cor:mult}). Hence $(a_1,b_2,c_4)$ is one of the $60$ patterns of \eqref{eq:grammar} with $14\le s\le31$. By Propositions~\ref{prop:lower} and~\ref{prop:upper}, an image of $S$ under $\mathrm{GL}(3,5)$ satisfies the hypotheses of one of the $78$ branches, and the exhaustive search of Section~\ref{sec:search} shows that no zero-sum short-zero-free sequence satisfies the hypotheses of any branch. This contradiction proves part~(a); part~(b) was derived from part~(a) in Section~\ref{sec:ingredients}. \qed

\section{Relation to prior work and scope of the claim}\label{sec:prior}

The lower bound \eqref{eq:lower} is Freeze and Schmid's; the recurrence of Lemma~\ref{lem:recurrence} and the eventual arithmetic-progression behaviour of $k\mapsto\Dk_k(G)$ are theirs as well \cite{FreezeSchmid2010}. In their notation the theorem says that $\Dk_k(\C)=\Dk_0(\C)+5k$ with $\Dk_0(\C)=10$ from $k=2$ onward. The value $\Dk_2(\C)=20$ is an immediate consequence of that bound with either Zhao's lemma and Olson's theorem or the machine value $\sle{7}(\C)=19$, and we do not regard it as new. Saturation arguments, the projective-line restriction and normalization by $\mathrm{GL}(3,5)$ are standard devices of the area. What this paper adds is the exact value $\Dk_4(\C)=30$ with its consequence $\Dk_k(\C)=5k+10$ for all $k\ge2$, the finite reduction of Sections~\ref{sec:obstruction}--\ref{sec:cover} that makes the remaining case decidable, and the verified computation that decides it. The route (saturation defect, multiplicity grammar, rank/plane branches normalized to a basis, exhaustive weight-layered search) is not specific to the numbers involved beyond $p=5$ and the length $31$, but Lemma~\ref{lem:saturated} needs Property~C or another structure theorem at length $\eta(G)-1$ for the group in question, which restricts its immediate reuse to groups where such a theorem is available.

\begin{remark}\label{rem:C23}
For $C_2^3$, \eqref{eq:FS} with $s=3$, $t=1$ and $\delta=0$ gives $\Dk_k(C_2^3)\ge4+3+2(k-2)=2k+3$ for $k\ge2$. Conversely $\Dk(C_2^3)=4$ and $\sle{2}(C_2^3)=8$ (eight elements of a group of order eight include $0$ or a repeated element), so Lemma~\ref{lem:recurrence} with $\ell=2$ gives $\Dk_2(C_2^3)\le\max\{4+2,\,7\}=7$ and $\Dk_{k+1}(C_2^3)\le\max\{\Dk_k(C_2^3)+2,\,7\}$, whence $\Dk_k(C_2^3)=2k+3$ for all $k\ge2$ by induction. Thus the Freeze--Schmid bound is attained from $k=2$ onward for $C_2^3$ and $\C$ but not for $C_3^3$. For which primes $p$ the bound \eqref{eq:FS}, namely $\Dk_k(C_p^3)\ge3p-2+3\lfloor p/2\rfloor+\delta+(k-2)p$, is attained for all $k\ge2$, and where the threshold in $k$ sits otherwise, we do not know.
\end{remark}

To the best of our knowledge, no exact value of $\Dk_k(\C)$ for any $k\ge2$ has appeared in the literature. Our search covered the arXiv record and public indexes with the vocabulary of generalized, $k$-th and multiwise Davenport constants and of short zero-sum obstructions; the closest results are the rank-two formula and its inverse theory \cite{Zhong2025}, the elementary $2$-group values \cite{DelormeOrdazQuiroz2001,FreezeSchmid2010}, the $C_3^3$ values \cite{BhowmikSchlagePuchta2007}, and work on other zero-sum invariants of rank-three groups over $C_3$ \cite{Zhang2023}. The route from $\Dk(G)$ to $\Dk_k(G)$ for $p$-groups with $\Dk(G)\le2\exp(G)-1$ (see \cite[Section~6.1]{GeroldingerHalterKoch2006}) does not apply to $\C$, where $13>9$.

\section{Provenance of the computations and limits of the claim}\label{sec:deferred}

Theorem~\ref{thm:main}(a) is a computer-assisted theorem whose non-computational part is written out above. Its computational part consists of the $78$-branch search, executed by three independently written implementations (two archived engines and the engine written for this paper), replayed on a second machine, and tied to a case analysis regenerated by two separately written programs, together with the archived exclusions of supports $8$--$13$, of which supports $9$--$13$ were re-run here. Part~(b) additionally rests on Proposition~\ref{prop:D2} (Zhao's preprint lemma, or the machine value $\sle{7}(\C)=19$), on the machine-verified inputs \ref{M1}--\ref{M2}, of which \ref{M1} has single-implementation provenance while \ref{M2} was recomputed here, and on the cited values $\eta(\C)=33$ with Property~C and $\eta(C_5^2)=13$.

The archived programs and records originate in the author's research programme, and the programs written for this paper (the third engine, its driver, and the threshold program) were developed without access to the archived engine code beyond the branch specifications and the enumeration protocol. Generative AI tools were used in preparing the programs and the text; the author is responsible for all content. No person outside that process has reviewed the case analysis or the programs; no externally produced or externally checked certificates for the individual branches exist; and the paper has not been refereed. These are deferred to the refereeing of this paper or to later independent replication, and nothing in the paper should be read as asserting that they have occurred.

\bibliographystyle{plainnat}
\bibliography{bibliography}
\end{document}